\documentclass[12pt]{article} 


\usepackage{amsmath,amssymb,amsfonts}
\usepackage{psfrag}
\usepackage{color}
\definecolor{darkblue}{rgb}{0.1,0.1,.7}
\usepackage[colorlinks, linkcolor=darkblue, citecolor=darkblue, urlcolor=darkblue, linktocpage]{hyperref} 
\usepackage[square, comma, compress,numbers]{natbib}
\usepackage[]{amsmath}
\usepackage[]{graphicx}
\usepackage[]{latexsym}
\usepackage{geometry}
\usepackage{amscd}
\usepackage[all,cmtip]{xy}
\usepackage{mathrsfs}
\usepackage[margin=10pt,font=small,labelfont=bf]{caption}
\geometry{verbose,letterpaper,tmargin=2.5cm,bmargin=2.5cm,lmargin=2.6cm,rmargin=2.6cm}
\usepackage{dsdshorthand}
\usepackage{simplewick}
\usepackage{changepage}
\setlength{\parskip}{0.1in}
\hyphenpenalty=1000

\numberwithin{equation}{section}

\renewcommand{\be}{\begin{eqnarray}}
\renewcommand{\ee}{\end{eqnarray}}
\newcommand{\bea}{\begin{eqnarray}}
\newcommand{\eea}{\end{eqnarray}}
 
\newcommand\nn\nonumber 
\newcommand\cS{\mathcal{S}}

\newcommand\SDPAGMP{\texttt{SDPA-GMP}}
\newcommand\SDPB{\texttt{SDPB}}

\newcommand\nmax{n_\mathrm{max}}
\newcommand\numax{\kappa}

\begin{document}

\vspace*{-.6in} \thispagestyle{empty}
\begin{flushright}
\end{flushright}
\vspace{.2in} {\Large
\begin{center}
{\bf A Semidefinite Program Solver for the Conformal Bootstrap \\\vspace{.1in}}
\end{center}
}
\vspace{.2in}
\begin{center}
{\bf 
David Simmons-Duffin} 
\\
\vspace{.2in} 
{\it School of Natural Sciences, Institute for Advanced Study, Princeton, New Jersey 08540}
\end{center}

\vspace{.2in}

\begin{abstract}
We introduce SDPB: an open-source, parallelized, arbitrary-precision semidefinite program solver, designed for the conformal bootstrap.  SDPB significantly outperforms less specialized solvers and should enable many new  computations.  As an example application, we compute a new rigorous high-precision bound on operator dimensions in the 3d Ising CFT, $\De_\s=0.518151(6)$, $\De_\e=1.41264(6)$.
\end{abstract}

\newpage

\tableofcontents

\newpage

\section{Introduction}
\label{sec:intro}

In \cite{Rattazzi:2008pe}, Rattazzi, Rychkov, Tonni, and Vichi showed how to bound Conformal Field Theory (CFT) observables using convex optimization. Shockingly, the resulting bounds are sometimes saturated by actual CFTs, allowing high-precision computations of nonperturbative quantities \cite{Rychkov:2009ij,Rychkov:2011et,ElShowk:2012ht,El-Showk:2014dwa,ElShowk:2012hu,Kos:2013tga,Beem:2013qxa,Kos:2014bka}.  This modern incarnation of the conformal bootstrap \cite{Ferrara:1973yt, Polyakov:1974gs} has been applied to numerous theories, and the list is growing \cite{Caracciolo:2009bx,Poland:2010wg,Rattazzi:2010gj,Rattazzi:2010yc,Vichi:2011ux,Poland:2011ey,Liendo:2012hy,El-Showk:2013nia,Alday:2013opa,Gaiotto:2013nva,Bashkirov:2013vya,Berkooz:2014yda,Nakayama:2014lva,Nakayama:2014yia,Chester:2014fya,Caracciolo:2014cxa,Alday:2014qfa,Nakayama:2014sba,Chester:2014mea,Bae:2014hia,Beem:2014zpa,Chester:2014gqa}.

With recent analytical work \cite{Costa:2011mg,Costa:2011dw,SimmonsDuffin:2012uy,Dymarsky:2013wla,Beem:2013sza,Fitzpatrick:2014oza,Khandker:2014mpa}, we now understand in principle how to formulate many interesting bootstrap calculations, like studies of four-point functions of scalars, fermions, conserved currents, stress-tensors, and mixed correlators combining all of these ingredients.  Such studies may shed light on the classification of critical points in condensed matter systems, the conformal windows of 4d gauge theories, the landscape of AdS string vacua, and more.  Each study inevitably culminates in an optimization problem that only can be solved numerically at present.

\SDPB\ is a custom optimizer with three purposes:
\begin{enumerate}

\item To enable the next generation of bootstrap studies involving conserved currents, stress tensors, and other complex ingredients.

\item To help improve predictions for CFTs already isolated with the bootstrap, like the 3d Ising CFT, 3d $O(N)$ vector models, and others.

\item To demonstrate the potential for dramatic improvement in current numerical bootstrap techniques and to encourage researchers in numerical optimization and computer science to contribute ideas and expertise.

\end{enumerate}

Custom optimizers have improved bootstrap calculations in the past.  For example in \cite{El-Showk:2014dwa}, a custom-written linear program solver enabled a new high-precision calculation of critical exponents in the 3d Ising CFT, surpassing what was possible with out-of-the-box solvers like  \texttt{Mathematica} \cite{Mathematica} (used in the original work \cite{Rattazzi:2008pe}), \texttt{GLPK} \cite{glpk}, and \texttt{CPLEX} \cite{cplex}.\footnote{The solver in \cite{El-Showk:2014dwa} is more accurately called a semi-infinite program solver.}

Unfortunately, linear programming is not applicable to systems of multiple correlators or operators with spin.  These more complicated cases can be attacked with semidefinite programming \cite{Poland:2011ey,Kos:2014bka}, for which previous studies have relied on the solvers \texttt{SDPA} \cite{SDPA,SDPA2} and \SDPAGMP\ \cite{SDPAGMP}.  The study \cite{Kos:2014bka}, in particular, pushed \SDPAGMP\ to its limits, with each optimization taking up to 2 weeks.  By contrast, \SDPB\ can perform the same optimization in 1-3 CPU-hours, or 4-12 minutes on a 16-core machine.

A general bootstrap problem can be approximated as a particular type of semidefinite program called a ``polynomial matrix program" (PMP).  \SDPB\ implements a variant of the well-known primal-dual interior point method for semidefinite programming \cite{doi:10.1137/1.9781611970791,alizadeh,citeulike:3473686,Vandenberghe:1995:PPR:213565.213573}, specialized for PMPs.  Specialization and parallelization are \SDPB's advantages.  We are optimistic that better designs and algorithms can be brought to bear in the future.

In section~\ref{sec:designofSDPB}, we describe PMPs and the design of \SDPB.  This discussion is mostly independent of the conformal bootstrap and should be comprehensible to readers without a physics background.  

In section~\ref{sec:applicationtobootstrap}, as an application of \SDPB, we set a new world-record for precision of critical exponents in the 3d Ising CFT, using multiple correlators as in \cite{Kos:2014bka}.  Readers interested solely in physics can skip to this section.  We conclude in section~\ref{sec:discussion}.

\SDPB\ is open source and available online at \href{https://github.com/davidsd/sdpb}{\texttt{https://github.com/davidsd/sdpb}}.

\section{Design of \SDPB}
\label{sec:designofSDPB}

\subsection{Polynomial Matrix Programs}
\label{sec:PMP}

\SDPB\ solves the following type of problem, which we call a {\it polynomial matrix program} (PMP).  Consider a collection of symmetric polynomial matrices
\be
M_j^n(x) &=& \begin{pmatrix}
P_{j,11}^{n}(x) & \dots & P_{j,1m_j}^{n}(x)\\
\vdots & \ddots & \vdots\\
P_{j,m_j1}^{n}(x) & \dots & P^{n}_{j,m_jm_j}(x)
\end{pmatrix}
\ee
labeled by $0 \leq n \leq N$ and $1 \leq j \leq J$,
where each element $P_{j,rs}^{n}(x)$ is a polynomial.  
Given $b\in \R^N$, we would like to
\be
\label{eq:PMPconstraint}
\begin{array}{ll}
\textrm{maximize} & b\.y\quad\textrm{over}\quad y \in \R^N,\\
\textrm{such that} & M^0_j(x)+\sum_{n=1}^{N} y_n M^n_j(x) \succeq 0 \quad \textrm{for all $x\geq 0$ and $1 \leq j \leq J$}.
\end{array}
\ee
The notation $M\succeq 0$ means ``$M$ is positive semidefinite."

As we review in section~\ref{sec:applicationtobootstrap}, a wide class of optimization problems from the conformal bootstrap can be written in this form.  Conveniently, PMPs can be translated into semidefinite programs (SDPs) and solved using interior point methods.  In the next few subsections, we perform this translation and describe an interior point algorithm for solving general SDPs.  Subsequently, we make this algorithm more efficient by exploiting special structure in PMPs.

\subsection{Translating PMPs into SDPs}
\label{sec:translationPMPtoSDP}

Let us begin by translating the PMP (\ref{eq:PMPconstraint}) into a more standard semidefinite program of the following form:
\be
\label{eq:traditionalSDP}
\begin{array}{ll}
 \textrm{maximize} & \Tr(CY) + b \. y \quad \textrm{over} \quad y\in \R^N,\ Y\in \cS^K, \\
\textrm{such that} & \Tr(A_* Y)+By = c,\ \textrm{and}\\
& Y \succeq 0,
\end{array}
\ee
where 
\be
c &\in& \R^P, \nn\\
B &\in& \R^{P\x N}, \nn\\
A_1,\dots,A_P,C &\in& \cS^K.
\ee
Here, $\cS^K$ is the space of $K\x K$ symmetric real matrices, and $\Tr(A_* Y)$ denotes the vector $(\Tr(A_1 Y),\dots,\Tr(A_P Y))\in\R^P$.  The SDP (\ref{eq:traditionalSDP}) is similar to those treated by solvers like \SDPAGMP, except that it includes the variables $y\in \R^N$, called ``free variables" because they are not constrained to be positive. The matrix $C$ will eventually be set to zero, but we include it for generality.

The first step is to relate positive semidefiniteness of polynomial matrices to positive semidefiniteness of a single matrix $Y$.  Let $q_0(x),q_1(x),\dots,$ be a collection of polynomials with degrees $0,1,\dots$ (for example, monomials $q_m(x)=x^m$), and define the vector $\vec q_\delta(x)=(q_0(x),\dots,q_\delta(x))$.  
We call $q_m(x)$ a ``bilinear basis" because products $q_m(x)q_n(x)$ span the space of polynomials.  In particular, any polynomial $P(x)$ of degree $d$ can be written
\be
P(x) &=& \Tr_{\R^{\delta_1+1}}(Y_1 Q_{\delta_1}(x)) + x \Tr_{\R^{\delta_2+1}}(Y_2 Q_{\delta_2}(x)),
\ee
where
\be
Q_\delta(x) &\equiv& \vec q_\delta(x)\vec q_\delta(x)^T,\nn\\
\delta_1 &\equiv& \lfloor d/2 \rfloor,\nn\\
\delta_2 &\equiv & \lfloor (d-1)/2\rfloor,
\ee
and $Y_1,Y_2$ are (underdetermined) symmetric matrices
\be
Y_1 &\in& \cS^{\de_1+1},\nn\\
Y_2 &\in& \cS^{\de_2+1}.
\ee
For a symmetric $m\x m$ polynomial matrix $M(x)$ of degree $d$, we can apply this construction to each element,
\be
\label{eq:bilinearbasisforM}
M(x) &=& \Tr_{\R^{\delta_1 + 1}}(Y_1 (Q_{\delta_1}(x) \otimes \mathbf{1}_{m\x m})) + x \Tr_{\R^{\delta_2 + 1}}(Y_2 (Q_{\delta_2}(x) \otimes \mathbf{1}_{m\x m})),\nn\\
Y_1 &\in& \cS^{m(\delta_1+1)},\nn\\
Y_2 &\in& \cS^{m(\delta_2+1)}.
\ee
where now $Y_1$ acts on $\R^{\delta_1+1}\otimes \R^m$ and $Y_2$ similarly acts on $\R^{\delta_2+1}\otimes \R^m$, and each trace is over the first tensor factor.

The translation of PMPs into SDPs relies on the following theorem:
\begin{theorem}
\label{thm:positivesemidefiniteness}
$M(x)$ is positive semidefinite for $x\in \R^+$ if and only if it can be written in the form (\ref{eq:bilinearbasisforM}) for some positive semidefinite $Y_1$ and $Y_2$.
\end{theorem}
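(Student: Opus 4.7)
The plan is an ``if and only if'' proof: the ``if'' direction is a direct calculation from $Y_1, Y_2 \succeq 0$, while ``only if'' I would reduce, via the matrix Fej\'er--Riesz theorem, to a sum-of-squares representation of matrix polynomials nonnegative on $[0,\infty)$.

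For the easy direction, I would assume $Y_1, Y_2 \succeq 0$ and observe that for any $v \in \R^m$ and $x \geq 0$,
\[
v^T \Tr_{\R^{\delta_i+1}}(Y_i(Q_{\delta_i}(x)\otimes \mathbf{1}_{m\times m})) v = (\vec q_{\delta_i}(x) \otimes v)^T Y_i (\vec q_{\delta_i}(x) \otimes v) \geq 0,
\]
since $Y_i$ is positive semidefinite. Adding the $i=1$ term to $x \geq 0$ times the $i=2$ term gives $v^T M(x) v \geq 0$, so $M(x) \succeq 0$.

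For the hard direction, assuming $M(x) \succeq 0$ for all $x \geq 0$, my strategy is first to establish a sum-of-squares representation
\[
M(x) = A(x)^T A(x) + x B(x)^T B(x),
\]
with matrix polynomials $A, B$ of degrees $\leq \delta_1$ and $\leq \delta_2$, and then convert it into the bilinear basis form. For the first step, I would substitute $x = t^2$: the matrix polynomial $\tilde M(t) \equiv M(t^2)$ has degree $2d$ and is PSD for all real $t$ since $t^2 \geq 0$. The matrix Fej\'er--Riesz theorem then yields a matrix polynomial $R(t)$ of degree $\leq d$ with $\tilde M(t) = R(t)^T R(t)$. I would split $R$ into its $t$-even and $t$-odd parts, $R(t) = A(t^2) + t B(t^2)$, so that $\deg A \leq \delta_1$ and $\deg B \leq \delta_2$. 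Using that $\tilde M(t)$ is even in $t$,
\[
\tilde M(t) = \frac{R(t)^T R(t) + R(-t)^T R(-t)}{2} = A(t^2)^T A(t^2) + t^2 B(t^2)^T B(t^2),
\]
because the $t$-odd cross terms cancel in the average. Substituting back $u = t^2$ produces the sum-of-squares form. For the second step, I would expand $A(x) = \sum_i A_i q_i(x)$ and similarly for $B$, so that $A(x)^T A(x) = \sum_{i,j} A_i^T A_j\, q_i(x) q_j(x) = \Tr_{\R^{\delta_1+1}}(Y_1(Q_{\delta_1}(x) \otimes \mathbf{1}_{m\times m}))$ with $Y_1 \in \cS^{m(\delta_1+1)}$ the Gram matrix whose $(i,j)$-block is $A_i^T A_j$; positive semidefiniteness of $Y_1$ is automatic because it factors as $Y_1 = \tilde A^T \tilde A$ for $\tilde A \equiv (A_0\ A_1\ \cdots\ A_{\delta_1})$. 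Treating $B$ analogously produces $Y_2 \succeq 0$.

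The main obstacle is invoking the matrix Fej\'er--Riesz theorem, which is the genuine nontrivial input. For scalar $m=1$ the claim reduces to the classical Luk\'acs decomposition of polynomials nonnegative on $[0,\infty)$ by sorting their real roots into negative ones and positive double ones; but in the matrix case one cannot simultaneously diagonalize $M(x)$ by a polynomial similarity, and a factorization theorem is genuinely required. Everything else is routine manipulation with bilinear bases and Gram matrices.
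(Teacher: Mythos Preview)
Your easy direction coincides with the paper's: both compute $v^T M(x) v$ and observe it is nonnegative because $Y_i \succeq 0$. You write it as the quadratic form $(\vec q_{\delta_i}(x) \otimes v)^T Y_i (\vec q_{\delta_i}(x) \otimes v)$, the paper as $\Tr(Y_i(Q_{\delta_i}(x) \otimes vv^T))$; these are the same quantity.

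For the hard direction the paper does not actually prove anything: it cites an unpublished direct argument and remarks that the claim also follows from the Biform Theorem together with the substitution $x = y^2$. Your proposal makes this second route concrete. You substitute $x = t^2$, invoke the matrix Fej\'er--Riesz (spectral factorization) theorem to write $M(t^2) = R(t)^T R(t)$, split $R$ into its $t$-even and $t$-odd parts, average with $R(-t)$ to eliminate cross terms, and then read off $Y_1, Y_2$ as Gram matrices of the coefficient matrices in the bilinear basis. This is correct and considerably more informative than the paper's treatment; it is essentially the route the paper points to, spelled out. One small caveat worth making explicit: the spectral factorization over $\R$ generally produces a \emph{rectangular} $R$ (for instance by stacking real and imaginary parts of the complex square factor), but your Gram-matrix construction $Y_1 = \tilde A^T \tilde A$ never assumes the $A_i$ are square, so nothing in your argument breaks.
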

\begin{proof}
One direction is simple: choose a vector $v\in \R^m$ and consider the pairing $v^T M(x) v=\Tr(Y_1(Q_{\delta_1}(x)\otimes vv^T))+\Tr(Y_2(xQ_{\delta_2}(x)\otimes vv^T))$.  Suppose $Y_1,Y_2$ are positive semidefinite and $x\geq 0$.  Then since $Q_{\delta_1}(x)\otimes vv^T$ and $xQ_{\delta_2}(x)\otimes vv^T$ are both positive semidefinite, it follows that $v^T M(x)v\geq 0$.
The other direction is less trivial.  It has been proven directly in \cite{HanselkaUnpublished} and also follows as a consequence of the ``Biform Theorem" of \cite{biformtheorem}, using the substitution $x=y^2$ and results of \cite{4435026}.\footnote{We thank Pablo Parrilo for pointing this out.}
\end{proof}

Theorem~\ref{thm:positivesemidefiniteness} lets us rewrite our PMP constraints (\ref{eq:PMPconstraint}) in terms of a collection of positive semidefinite matrices $Y_1,\dots,Y_{2J}$. We equate each polynomial matrix $M_j^n(x)$ to an expression of the form (\ref{eq:bilinearbasisforM}).  Individual matrix elements $P^{n}_{j,rs}$ can be isolated by taking the trace over $\R^{m_j}$ with symmetrized unit matrices
\be
(E^{rs})_{ij} &\equiv& \frac 1 2 (\de^r_i \de^s_j+\de^s_i \de^r_j).
\ee
This gives a set of polynomial equalities
\be
\label{eq:polynomialequality}
P^{0}_{j,rs}(x) + \sum_n y_n P^{n}_{j,rs}(x) &=& \Tr(Y_{2j-1} (Q_{\de_{j1}}(x)\otimes E^{rs})) + \Tr(Y_{2j} ( xQ_{\de_{j2}}(x)\otimes E^{rs})),\nn\\
d_j &\equiv& \max_{n=0}^N\, \mathrm{deg}(M_j^n(x)),\nn\\
\de_{j1}&\equiv&\lfloor d_j/2\rfloor,\nn\\
\de_{j2}&\equiv&\lfloor (d_j-1)/2\rfloor.
\ee
Equality between polynomials of degree $d_j$ is equivalent to equality at $d_j+1$ points.  Thus, evaluating (\ref{eq:polynomialequality}) at points $x_0,\dots,x_{d_j}$, we obtain a set of affine relations between the $y_n$ and $Y_j$,\footnote{We could alternatively match coefficients on both sides of (\ref{eq:polynomialequality}) as in \cite{Poland:2011ey,Kos:2013tga,Kos:2014bka}.  We will see in subsection~\ref{sec:computingschur} why pointwise evaluation is preferable.}
\be
\label{eq:linearPMPequalities}
P^{0}_{j,rs}(x_k)+\sum_n y_n P^{n}_{j,rs}(x_k) &=& \Tr(Y_{2j-1} ( Q_{\de_{j1}}(x_k)\otimes E^{rs})) + \Tr(Y_{2j} (x_k Q_{\de_{j2}}(x_k) \otimes E^{rs}))\nn\\
&& 0 \leq r\leq s < m_j,\quad k=0,\dots,d_j.
\ee

Let us group the $Y_j$'s into a single block-diagonal positive semidefinite matrix
\be
\label{eq:blockstructureforY}
Y = \begin{pmatrix}
Y_1 & 0 & \cdots & 0\\
 0 & Y_2 & \cdots & 0\\
 \vdots & \vdots & \ddots & \vdots\\
 0 & 0 & \cdots & Y_{2J}
\end{pmatrix}.
\ee
The constraints (\ref{eq:linearPMPequalities}) now take the form
\be
\Tr(A_p Y) + (B y)_p = c_p,
\ee
where $p$ runs over all tuples $(j,r,s,k)$ satisfying $0\leq r \leq s < m_j$, and $0\leq k \leq d_j$.  The matrices $A_p, B,C$ and vector $c$ are given by
\be
\label{eq:ApforPMP}
A_{(j,r,s,k)} &=&
\begin{pmatrix}
0 & \cdots & 0 & 0 & \cdots & 0\\
\vdots & \ddots & \vdots & \vdots & &  \vdots\\
0 & \cdots & Q_{\de_{j1}}(x_k)\otimes E^{rs} & 0 & \cdots & 0\\
0 & \cdots & 0 & x_kQ_{\de_{j2}}(x_k)\otimes E^{rs} & \cdots & 0\\
\vdots & & \vdots & \vdots & \ddots & \vdots\\
0 & \cdots & 0 & 0 & \cdots & 0
\end{pmatrix},\\
\label{eq:BforPMP}
B_{(j,r,s,k),n} &=& -P^{n}_{j,rs}(x_k),\\
\label{eq:CforPMP}
c_{(j,r,s,k)} &=& P^{0}_{j,rs}(x_k),\\
C &=& 0.
\ee

This completes the translation of our PMP into an SDP (\ref{eq:traditionalSDP}).  Note that the matrices $A_p$ are far from generic.  Exploiting this fact will help us solve PMPs much more efficiently than a generic SDP.  \SDPB\ is specifically designed to solve SDPs with constraint matrices of the form
\be
\label{eq:specialconstraintmatrices}
A_{(j,r,s,k)} &=& \sum_{b\,\in\, \texttt{blocks}_j} \cB_b(\vec v_{b,k} \vec v_{b,k}^T \otimes E^{rs}),
\ee
where $\cB_b(M)$ denotes the block-diagonal matrix with $M$ in the $b$-th block and zeros everywhere else,
\be
\cB_b(M) &\equiv&
\begin{pmatrix}
0 & \cdots & 0 & \cdots & 0  \\
\vdots & \ddots & \vdots & & \vdots  \\
0 & \cdots & M & \cdots & 0  \\
\vdots & & \vdots & \ddots & \vdots \\
0 & \cdots & 0 & \cdots & 0 & \\
\end{pmatrix}
\hspace{-0.4in}
\begin{array}{c}
\left.\phantom{\begin{pmatrix}
1\\
2\\
\\
\end{pmatrix}}\right\}b\\
\phantom{ \footnotesize {\begin{pmatrix}
3\\
4\\
\\
\end{pmatrix}}}
\end{array},
\ee
and the sets $\texttt{blocks}_j$ are disjoint for different $j$.  In the example above, we have
\be
\texttt{blocks}_j &=& \{2j-1, 2j\},\nn\\
\vec v_{2j-1,k} &=& \vec q_{\delta_{j1}}(x_k),\nn\\
\vec v_{2j,k} &=& x_k^{1/2} \vec q_{\delta_{j2}}(x_k).
\ee

\subsection{Semidefinite Program Duality}

Duality plays an important role in our interior point algorithm, so let us briefly review it.  
The problem (\ref{eq:traditionalSDP}) is related by duality to the following ``primal" optimization problem:
\be
\label{eq:primaldualproblems}
\begin{array}{rll}
\mathcal{P}: & \textrm{minimize} & c\.x \quad \textrm{over}\quad x\in \R^P,\ X\in \cS^K,\\
& \textrm{such that} & X= \sum_{p=1}^P A_p x_p - C,\\
& &  B^T x= b,\\
& &  X\succeq 0.
\end{array}
\ee
We refer to the problem (\ref{eq:primaldualproblems}) as $\cP$ for ``primal" and (\ref{eq:traditionalSDP}) as $\cD$ for ``dual."  We say that $\cP$ is ``feasible" if there exist $x,X$ satisfying the constraints~(\ref{eq:primaldualproblems}).  Similarly, $\cD$ is feasible if there exist $y,Y$ satisfying the constraints~(\ref{eq:traditionalSDP}).  The ``duality gap" is defined as the difference in primal and dual objective functions, $c\.x - \Tr(CY)-b\.y$.

For our purposes, the statement of duality is as follows:

\begin{theorem}[Semidefinite Program Duality]
Given a feasible point $(x,X)$ of $\cP$ and a feasible point $(y,Y)$ of $\cD$, the duality gap is nonnegative.  If the duality gap vanishes, then $(x,X)$ and $(y,Y)$ are each optimal solutions of $\cP$ and $\cD$, and furthermore $XY=0$.
\end{theorem}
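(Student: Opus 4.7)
The plan is to show that the duality gap is exactly $\Tr(XY)$; once this identity is in hand, both claims reduce to standard facts about products of positive semidefinite matrices.

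First I would compute $c \cdot x$ by substituting the dual constraint $c = \Tr(A_* Y) + By$ componentwise:
\[
c \cdot x = \sum_p c_p x_p = \sum_p \Tr(A_p Y)\, x_p + \sum_p (By)_p\, x_p = \Tr\!\Big(Y \sum_p A_p x_p\Big) + y \cdot (B^T x).
\]
Now I apply the two primal constraints $\sum_p A_p x_p = X + C$ and $B^T x = b$ to obtain $c \cdot x = \Tr(Y(X+C)) + y \cdot b = \Tr(XY) + \Tr(CY) + b \cdot y$. Rearranging identifies the duality gap as $\Tr(XY)$.

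Next I would invoke the standard lemma that $\Tr(XY) \geq 0$ whenever $X, Y \succeq 0$: writing $X = U^T U$, we get $\Tr(XY) = \Tr(UYU^T)$, which is the trace of a PSD matrix and hence nonnegative. This settles nonnegativity of the gap. Optimality is then immediate: if the gap vanishes for a particular pair $(x,X)$ and $(y,Y)$, then for any other feasible primal point $(x',X')$, weak duality gives $c \cdot x' \geq \Tr(CY) + b\cdot y = c \cdot x$, so $(x,X)$ is primal-optimal, and symmetrically for the dual.

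Finally, to extract $XY = 0$ from $\Tr(XY) = 0$, I would use a spectral decomposition $X = \sum_i \lambda_i v_i v_i^T$ with $\lambda_i \geq 0$. Then $\sum_i \lambda_i\, v_i^T Y v_i = 0$ is a sum of nonnegative terms, forcing each term to vanish; whenever $\lambda_i > 0$ this gives $v_i^T Y v_i = 0$, which in turn implies $Y v_i = 0$ because $Y \succeq 0$. Hence $YX = \sum_i \lambda_i (Y v_i) v_i^T = 0$, and symmetry of $X$ and $Y$ yields $XY = 0$. The only nontrivial step is this last one — upgrading the scalar identity $\Tr(XY)=0$ to the matrix identity $XY=0$ — and even there the argument is a few lines; the rest of the proof is pure bookkeeping with the constraints.
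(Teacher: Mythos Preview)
Your proof is correct and follows essentially the same route as the paper: both reduce the duality gap to $\Tr(XY)$ by substituting the feasibility constraints, then invoke positive semidefiniteness of $X$ and $Y$. You actually supply more detail than the paper does---the paper asserts $\Tr(XY)\geq 0$ and $\Tr(XY)=0\Rightarrow XY=0$ without argument, whereas you spell out the factorization and spectral-decomposition steps---but the structure and key ideas are identical.
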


\begin{proof}
Suppose we have feasible solutions $(x,X)$ and $(y,Y)$.  The duality gap is given by
\be
c\.x - \Tr(C Y) - b\.y &=& c\.x - \Tr\p{\p{\sum_p A_p x_p-X}Y} - b\.y\nn\\
&=& c\.x  + \Tr(XY)- x\.\Tr(A_* Y) -b\.y\nn\\
&=& c\.x+\Tr(XY) - x\.(\Tr(A_* Y) + By)\nn\\
&=& \Tr(XY) \geq 0,
\ee
where nonnegativity follows because $X$ and $Y$ are positive semidefinite.  Now suppose $\Tr(XY)$ vanishes.  Clearly this implies $XY=0$ identically (this condition is called ``complementarity").  Because $c\.x$ is bounded from below by the dual objective $\Tr(CY)+b\.y$ and also equal to the dual objective, the point $(x,X)$ must be optimal.  Similarly, since $\Tr(CY)+b\.y$ is bounded from above by the primal objective $c\.x$ and also equal to the primal objective, the point $(y,Y)$ must be optimal as well.
\end{proof}

Unlike in linear programming, there is no guarantee that either $\cP$ or $\cD$ will attain their respective optima, or that the duality gap will vanish.  For this, we need additional regularity assumptions.  One of them is Slater's condition, which says that the duality gap vanishes if there exist {\em strictly} feasible solutions to the primal and dual constraints --- i.e. solutions where $X,Y\succ 0$ are positive-definite.  
Slater's condition is generic in the sense that a small perturbation of a feasible but not strictly-feasible problem will typically produce a strictly-feasible problem.

\subsection{An Interior Point Method}

The idea behind primal-dual interior point methods is to solve both the primal and dual equations simultaneously to find an optimal point $q=(x,X,y,Y)$.
As we saw in the previous subsection, the optimum is (generically) achieved by a pair of positive semidefinite matrices $X, Y$ satisfying the ``complementarity" condition  $XY=0$.  Most algorithms work by deforming this condition to 
\be
\label{eq:complementarity}
XY=\mu I
\ee
for some nonzero $\mu$, where $I$ is the identity matrix.  The constraints together with (\ref{eq:complementarity}) then have a unique family of solutions called the ``central path:" $q(\mu)=(x(\mu),X(\mu),y(\mu),Y(\mu))$ indexed by $\mu\in \R^+$.  By following the central path from positive $\mu$ towards $\mu=0$, we can find the optimum of the original problem.

In practice, instead of moving precisely along the central path, we use the following strategy. Consider an initial point $q=(x,X,y,Y)$ such that $X,Y$ are positive semidefinite.  Our goal is to decrease $\Tr(XY)$ and move towards the constraint locus while maintaining positive semidefiniteness.
\begin{itemize}
\item Set $\mu = \beta \Tr(XY)/K$ for some $\beta<1$, where $K$ is the number of rows of $X$.
\item Use Newton's method to compute a direction $dq=(dx,dX,dy,dY)$ towards the central path with the given $\mu$.
\item Take a step along $dq$, taking care not to violate the positive semidefiniteness of $X,Y$.  This should result in a reduction of $\Tr(XY)$ by roughly a factor of $\beta$.
\item Repeat.
\end{itemize}
This is essentially Newton's method with a moving target.

An important advantage of this method is that the initial starting point $(x,X,y,Y)$ need not satisfy any of the equality constraints in (\ref{eq:traditionalSDP}) and (\ref{eq:primaldualproblems}).  As long as we start with positive semidefinite $X,Y$, and the problem is feasible, the above method will converge to a point where the equality constraints are satisfied.

\subsubsection{Newton Search Direction}

Let us describe a single Newton step in more detail.  The direction $dq$ is defined by replacing $q\to q+dq$ and solving the constraint equations (\ref{eq:traditionalSDP}, \ref{eq:primaldualproblems}) and complementarity equation (\ref{eq:complementarity}) at linear order in $dq$,
\be
\label{eq:searchstep}
X+dX &=& \sum_i A_i(x_i+dx_i) - C,\nn\\
B^T(x+dx) &=& b,\nn\\
\Tr(A_* (Y+dY))+B(y+dy) &=& c,\nn\\
XY + X dY + dX Y &=& \mu I.
\ee
The residues
\be
\label{eq:residues}
P &\equiv& \sum_i A_i x_i - X - C, \nn\\
p &\equiv& b - B^T x, \nn\\
d &\equiv& c - \Tr(A_* Y) - B y, \nn\\
R &\equiv& \mu I - XY,
\ee
measure the failure of the current point to satisfy the constraints.  These residues will decrease with each Newton step.  The linearized equations (\ref{eq:searchstep}) can then be written
\be
\label{eq:searchstepsolutionblock}
\begin{pmatrix}
S & -B\\
B^T & 0
\end{pmatrix}
\begin{pmatrix}
dx\\
dy
\end{pmatrix} &=&
\begin{pmatrix}
-d-\Tr(A_*Z)\\
p
\end{pmatrix}, \\
\label{eq:dX}
dX &=& P + \sum_i A_i dx_i,\\
\label{eq:dYWrong}
dY &=& X^{-1}(R- dX Y),
\ee
where $Z = X^{-1} (PY - R)$ and the ``Schur complement" matrix $S$ is defined by
\be
S_{ij} &=& \Tr(A_i X^{-1} A_j Y).
\ee
To find the search direction, we first solve (\ref{eq:searchstepsolutionblock}) for $dx$, $dy$, and then plug into (\ref{eq:dX}) and $(\ref{eq:dYWrong})$ to determine $dX,dY$.  Na\"ively applying (\ref{eq:dYWrong}) leads to a $dY$ that is not necessarily symmetric, which would take us outside of the domain of definition of $Y$.  Several solutions to this problem have been proposed \cite{Todd98astudy}.  Our approach, following \cite{citeulike:3473686,Kojima:1997:IMM:588890.589063,Monteiro96primal-dualpath-following}, will be to symmetrize $dY$ by hand, replacing (\ref{eq:dYWrong}) with 
\be
\label{eq:dY}
\widehat{dY} &=& X^{-1}(R- dX Y),\nn\\
dY &=& \frac 1 2 \p{\widehat {dY} + \widehat {dY}^T}.
\ee

\subsubsection{Mehrotra Predictor-Corrector Trick}
\label{sec:mehrotra}

The most expensive operations in the search direction calculation are forming the Schur complement matrix $S$ and solving the Schur complement equation (\ref{eq:searchstepsolutionblock}).  We'd like to perform them as rarely as possible.  A simple modification to the na\"ive Newton's method, due to Mehrotra \cite{doi:10.1137/0802028}, allows us to get closer to the central path while reusing $S$ and most of the work done in solving  (\ref{eq:searchstepsolutionblock}).

The rough idea is to solve the constraint and complementarity equations at higher order. This proceeds in two steps, called the ``predictor" and ``corrector" steps, respectively.  For the predictor step, we compute a direction as described above, which we call $dq_p=(dx_p,dX_p,dy_p,dY_p)$.  We then replace the linearized complementarity equation (\ref{eq:searchstep}) with
\be
\label{eq:correctorcomplementarity}
XY+X dY + dX Y + dX_p dY_p = \mu I
\ee
and re-solve to obtain a corrector direction $dq_c$.  In the corrector step, we may (optionally) use a smaller deformation parameter $\mu \to \mu_c$ to get closer to $\Tr(XY)= 0$.  Note that the replacement of (\ref{eq:searchstep}) with (\ref{eq:correctorcomplementarity}) does not change the Schur complement matrix $S_{ij}$, so it can be reused, together with any matrix decompositions performed in solving (\ref{eq:searchstepsolutionblock}).  Altogether, the corrector step amounts to simply replacing
\be
R \to \mu_c I - X Y - dX_p dY_p
\ee
before solving (\ref{eq:searchstepsolutionblock}), (\ref{eq:dX}), and (\ref{eq:dY}).

\subsubsection{Termination Conditions}

We say a point $q$ is ``primal feasible" if the residues $p,P$ are sufficiently small.  Similarly, the solution is ``dual feasible" if the residue $d$ is sufficiently small.  The precise conditions are
\be
\begin{array}{rrcccl}
\textrm{primal feasible:} &  \texttt{primalError} &\equiv& \max_{i,j}\{|p_i|, |P_{ij}|\} &<& \texttt{primalErrorThreshold};\\
\textrm{dual feasible:} & \texttt{dualError} &\equiv& \max_i\{|d_i|\} &<& \texttt{dualErrorThreshold},
\end{array}\nn\\
\ee
where $\texttt{primalErrorThreshold}\ll 1$ and $\texttt{dualErrorThreshold} \ll 1$ are parameters chosen by the user.

An optimal point should be both primal and dual feasible, and have (nearly) equal primal and dual objective values.  Specifically, let us define $\texttt{dualityGap}$ as the normalized difference between the primal and dual objective functions
\be
\texttt{dualityGap} &\equiv& \frac{|\texttt{primalObjective} - \texttt{dualObjective}|}{\max\{1, |\texttt{primalObjective} + \texttt{dualObjective}|\}}, \nn\\
\texttt{primalObjective} &\equiv& c\. x, \nn\\
\texttt{dualObjective} &\equiv& \Tr(CY)+b\.y.
\ee
A point is considered ``optimal" if
\be
\texttt{dualityGap} &<& \texttt{dualityGapThreshold},
\ee
where $\texttt{dualityGapThreshold} \ll 1$ is chosen by the user.

\subsubsection{Complete Algorithm}
\label{sec:completealgorithm}

Our complete interior point algorithm is as follows.

\begin{enumerate}

\item Choose an initial point $q=(x,X,y,Y)=(0,\Omega_\cP I,0,\Omega_\cD I)$ where $\Omega_\cP$ and $\Omega_\cD$ are real and positive.  This point probably does not satisfy the constraints.

\item \label{startofrecursion}
 Compute the residues (\ref{eq:residues}) and terminate if $q$ is simultaneously primal feasible, dual feasible, and optimal.  (Sometimes we may wish to use a different termination criterion, see below.)

\item Let $\mu=\Tr(XY)/K$ and compute the predictor deformation parameter $\mu_p = \beta_p \mu$ where
\be
\beta_p = \begin{cases}
0 & \textrm{if $q$ is primal and dual feasible};\\
\beta_{\textrm{infeasible}} & \textrm{otherwise}. \\
\end{cases}
\ee
Here, $\beta_{\textrm{infeasible}} \in (0,1)$ is chosen by the user.

\item Compute the predictor search direction $dq_p$ by solving eqs.~(\ref{eq:searchstepsolutionblock}, \ref{eq:dX}, \ref{eq:dY}) with $R=\mu_p I - XY$.

\item Compute the corrector deformation parameter $\mu_c = \beta_c \mu$ as follows.  Let $r=\Tr((X+dX_p)(Y+dY_p))/(\mu K)$ and set
\be
\beta &=& \begin{cases}
r^2 & \textrm{if}\ r<1;\\
r & \textrm{otherwise},
\end{cases}\\
\beta_c &=& \begin{cases}
\min(\max(\beta_\textrm{feasible}, \beta), 1) & \textrm{if $q$ is primal and dual feasible};\\
\max(\beta_\textrm{infeasible}, \beta) & \textrm{otherwise},
\end{cases}
\ee
where $\beta_\textrm{feasible}\in (0,1)$ is a parameter chosen by the user.
This choice of $\beta_c$ is modeled after the one in \texttt{SDPA}.

\item Compute the corrector search direction $dq_c$ by solving  eqs.~(\ref{eq:searchstepsolutionblock}, \ref{eq:dX}, \ref{eq:dY}) with $R=\mu_c I - XY - dX_p dY_p$.

\item Determine the primal and dual step lengths 
\be
\label{eq:steplengths}
\alpha_\cP &=& \min(\gamma\alpha(X,dX_c),1),\nn\\
\alpha_\cD &=& \min(\gamma\alpha(Y,dY_c),1),
\ee
where $\alpha(M,dM)$ is the largest positive real number such that $M+\alpha(M,dM) dM$ is positive semidefinite, and $\gamma \in (0,1)$ is a parameter chosen by the user.

\item Replace
\be
\label{eq:takingnewtonstep}
x &\to& x+\alpha_\cP dx_c,\nn\\
X &\to& X+\alpha_\cP dX_c,\nn\\
y &\to& y+\alpha_\cD dy_c,\nn\\
Y &\to& Y+\alpha_\cD dY_c,
\ee
and go to step \ref{startofrecursion}.  Note that the replacement (\ref{eq:takingnewtonstep}) is guaranteed to preserve positive semidefiniteness of $X$ and $Y$.

\end{enumerate}

If the current point is close enough to a primal (or dual) feasible region, the step-length $\alpha_\cP$ ($\alpha_\cD$) in (\ref{eq:steplengths}) can be exactly 1.  When this occurs, the replacement (\ref{eq:takingnewtonstep}) will exactly solve the primal (dual) equality constraints, up to numerical errors.  This follows from linearity of the equality constraints, together with the fact that symmetrizing $Y$ in (\ref{eq:dY}) has no effect on the constraints.  In cases where we only care about primal or dual feasibility, the iteration can be stopped here, see section~\ref{sec:example3dIsing}.

\subsection{Specialization to Polynomial Matrix Programs}

As mentioned in section~\ref{sec:mehrotra}, the most expensive part of the search direction calculation is computing the Schur complement matrix $S_{pq}=\Tr(A_p X^{-1} A_q Y)$ and inverting (\ref{eq:searchstepsolutionblock}) to obtain $dx,dy$.  In this section, we will specialize to PMPs, and study how these calculations can be made more efficient.  For similar optimizations, see \cite{4434343}.

\subsubsection{Block Structure of the Schur Complement Matrix}

Recall that for PMPs, the matrices $A_p$ are given by (\ref{eq:specialconstraintmatrices}) with the index $p$ running over tuples $(j,r,s,k)$ satisfying $0\leq r \leq s < m_j$, and $0\leq k \leq d_j$.  Since $X$ and $Y$ have the block structure (\ref{eq:blockstructureforY}), the Schur complement matrix $S_{p_1p_2}$ is block-diagonal: it has nonzero entries only if $j_1=j_2$,
\be
S &=& \begin{pmatrix}
S^{(1)} & 0 & \cdots & 0\\
 0 & S^{(2)} & \cdots & 0\\
 \vdots & \vdots & \ddots & \vdots\\
 0 & 0 & \cdots & S^{(J)}
 \end{pmatrix}.
\ee
The size of the $j$-th block $\dim S^{(j)}$ is equal to the number of choices for $(r,s,k)$,
\be
\dim S^{(j)} &=& \frac{m_j(m_j+1)}{2}(d_j+1).
\ee

Now consider equation (\ref{eq:searchstepsolutionblock}),
\be
\label{eq:searchstepsolutionblocktwo}
T\begin{pmatrix}
dx\\
dy
\end{pmatrix}
&=&
\begin{pmatrix}
-d-\Tr(A_* Z)\\
p
\end{pmatrix},\quad \textrm{where} \quad
T \equiv \begin{pmatrix}
S & -B\\
B^T & 0
\end{pmatrix}.
\ee
We could solve it using an LU (lower triangular $\x$ upper triangular) decomposition of $T$, but this is extremely expensive, taking cubic time in $\dim T = \sum_j \dim S^{(j)}+N$.  

We should use the block structure of $T$ to our advantage. Let $S=L L^T$ be a Cholesky decomposition of $S$ (which can be computed efficiently for each block $S^{(j)}=L^{(j)} L^{(j)T}$), and consider the decomposition
\be
\label{eq:naiveblockdecomposition}
\begin{pmatrix}
S & -B\\
B^T & 0
\end{pmatrix}
&=&
\begin{pmatrix}
L & 0\\
B^T L^{-T} & \mathbf{1}
\end{pmatrix}
\begin{pmatrix}
\mathbf{1} & 0\\
0 & B^TL^{-T}L^{-1}B
\end{pmatrix}
\begin{pmatrix}
L^T & -L^{-1}B\\
0 & \mathbf{1}
\end{pmatrix}.
\ee
The outer matrices on the right hand side are triangular, and can be solved efficiently by forward/backward-substitution.  Meanwhile, the matrix $Q\equiv B^T L^{-T} L^{-1} B$ typically has a much smaller dimension than $S$, $\dim Q=N \ll \dim T$, so the middle block-matrix can be easily solved using a Cholesky decomposition.\footnote{The matrix $Q$, is often called the ``Schur complement" in the block matrix decomposition~(\ref{eq:naiveblockdecomposition}).  We will continue to use the words ``Schur complement" to refer to $S$ (which is the Schur complement of a different block matrix system).  Hopefully this will not cause confusion.}

Unfortunately, the decomposition (\ref{eq:naiveblockdecomposition}) is numerically unstable when $S$ is ill-conditioned.  Indeed, suppose $S$ has a very small eigenvalue (so $L$ does too), while the full matrix $T$ does not.  Then quantities like $L^{-1}B$ that appear in (\ref{eq:naiveblockdecomposition}) will have large entries which must nearly cancel when recombined into a solution of (\ref{eq:searchstepsolutionblocktwo}).  Near-cancellation reduces numerical precision.

These problems stem from the off-diagonal blocks of $T$, which ultimately come from the free variables $y$ in our semidefinite program.  Several authors have considered the problem of efficiently and stably solving semidefinite programs with free variables, with no obvious consensus \cite{Anjos:2007:HFV:1350561.1350570}.  For example, \cite{Kobayashi:2007:CSF:1236567.1236578} suggests eliminating free variables by explicitly solving the primal constraint $B^T x = b$ and taking appropriate linear combinations of the matrices $A_p$.  However this procedure destroys the sparsity structure of $S$, making it no longer block diagonal and forcing us to use an expensive full Cholesky decomposition.

Preserving the structure of $S$ and $T$ is paramount.  The simplest way to stabilize (\ref{eq:naiveblockdecomposition}) is to increase the precision of the underlying arithmetic.  In practice, this appears to be necessary anyway for larger-scale bootstrap problems, see appendix~\ref{sec:parameters}.  For additional stabilization, we use an old trick of adding low-rank pieces to $S$ to make it better-conditioned.  Suppose $u_i$ are vectors, each with a single nonzero entry, such that $S'\equiv S+\sum_i u_i u_i^T= S+U U^T$ has no small eigenvalues.\footnote{\label{footnote:stabilize}The $u_i$ can be found as follows. During Cholesky decomposition $S=L L^T$, we keep track of the geometric mean $\Lambda$ of the diagonal entries. Whenever we encounter a diagonal entry $L_{ii}<\theta \Lambda$, where $\theta \ll 1$ is a parameter, we replace $L_{ii}\to L_{ii}+\Lambda$, which amounts to choosing $u_i=\Lambda e_i$ where $e_i$ is a unit vector in the $i$-th direction.}  Note that $S'$ differs from $S$ in only a few diagonal entries --- in particular it has the same block structure.  Now let us replace (\ref{eq:searchstepsolutionblocktwo}) with the system
\be
\label{eq:stabilizedschurcomplement}
T'\begin{pmatrix}
dx \\
dy \\
z
\end{pmatrix}
=
\begin{pmatrix}
S' & -B & -U \\
B^T & 0 & 0\\
U^T & 0 & -\mathbf{1}
\end{pmatrix}
\begin{pmatrix}
dx \\
dy \\
z
\end{pmatrix}
=
\begin{pmatrix}
-d-\Tr(A_* Z)\\
p\\
0
\end{pmatrix}.
\ee
By solving for $z$ and substituting back in, it is easy to see that (\ref{eq:stabilizedschurcomplement}) is precisely equivalent to (\ref{eq:searchstepsolutionblocktwo}).  However, the advantage is that because $S'$ is well-conditioned, a decomposition like (\ref{eq:naiveblockdecomposition}) is numerically stable.  Indeed, defining $B'=(B\ U)$, we have
\be
\label{eq:modifieddecomposition}
T'
&=&
\begin{pmatrix}
L' & 0\\
B'^T L'^{-T} & \mathbf{1}
\end{pmatrix}
\begin{pmatrix}
\mathbf{1} & 0\\
0 & Q'
\end{pmatrix}
\begin{pmatrix}
L'^T & -L'^{-1}B'\\
0 & \mathbf{1}
\end{pmatrix},\\
Q' &=& B'^T L'^{-T} L'^{-1} B' - \begin{pmatrix}
0 & 0\\
0 & \mathbf{1}
\end{pmatrix},
\ee
where $S'=L'L'^T$.  Because $Q'$ is no longer necessarily positive semidefinite, we are forced to use an LU decomposition to invert the middle matrix in (\ref{eq:modifieddecomposition}), which is slightly more expensive than Cholesky decomposition.  Fortunately, $Q'$ is usually small enough that this cost is inconsequential. If $T$ itself is ill-conditioned, then this will manifest as instabilities when we try to LU decompose $Q'$.  In this situation, there is little we can do to avoid imprecision.

\subsubsection{Computing the Schur Complement Matrix}
\label{sec:computingschur}

Now that we know what to do with the Schur complement matrix $S$, let us compute it efficiently.  
The fact that $A_p$ has low rank is helpful.  This explains why we chose to evaluate the polynomial equalities (\ref{eq:polynomialequality}) at discrete points $x_k$ in (\ref{eq:linearPMPequalities}).  Matching polynomial coefficients on each side, as done in \cite{Poland:2011ey,Kos:2013tga,Kos:2014bka}, leads to higher-rank matrices $A_p$.  The helpfulness of low-rank constraints in solving SDPs, and their appearance in polynomial optimization, is well known \cite{4434343}.

Recall that $X$ and $Y$ are block diagonal (\ref{eq:blockstructureforY}), with each block $X_b$ acting on a tensor product of the form $\R^{\delta+1}\otimes \R^m$.  Let $X_b^{(r,s)} \in \R^{(\delta+1)\x (\delta+1)}$ denote the $(r,s)$-th block of $X_b$ in the second tensor factor, which acts on $\R^{\delta+1}$.

Since $S$ is block diagonal, we need only compute elements with $j_1=j_2=j$.  We have
\be
\label{eq:schurcomplementusingpairings}
S_{(j,r_1,s_1,k_1),(j,r_2,s_2,k_2)} &=& \sum_{b\,\in\,\texttt{blocks}_j}\Tr((\vec v_{b,k_1}\vec v_{b,k_1}^T\otimes E^{r_1s_1})X_b^{-1}(\vec v_{b,k_2}\vec v_{b,k_2}^T\otimes E^{r_2s_2})Y_b)\nn\\
&=& \sum_{b\,\in\,\texttt{blocks}_j} \frac 1 4 \Big((\vec v_{b,k_1}^T (X^{-1}_b)^{(s_1,r_2)} \vec v_{b,k_2})(\vec v_{b,k_2}^T Y_b^{(s_2,r_1)}\vec v_{b,k_1}) \nn\\
&& \qquad\qquad+\, (r_1\leftrightarrow s_1) + (r_2\leftrightarrow s_2)+ (r_1\leftrightarrow s_1,r_2\leftrightarrow s_2)\Big).
\label{eq:schurintermsofbilinearpairings}
\ee
Thus, instead of performing repeated matrix multiplications to calculate $\Tr(A_{p_1} X^{-1} A_{p_2} Y)$, we can precompute the bilinear pairings
\be
\label{eq:bilinearpairings}
U^{(b)}_{(d_j+1) s+k_1, (d_j+1) r + k_2} &\equiv& \vec v_{b,k_1}^T (X^{-1}_b)^{(s,r)} \vec v_{b,k_2},\\
V^{(b)}_{(d_j+1) s + k_2, (d_j+1) r + k_1} &\equiv & \vec v_{b,k_2}^T Y_b^{(s,r)}\vec v_{b,k_1},
\ee
and plug them into (\ref{eq:schurintermsofbilinearpairings}).  Whereas forming $S$ is often the most expensive operation in less-specialized solvers, the method outlined here makes it subdominant to other computations, see appendix~\ref{sec:complexity}.

\subsubsection{Computing Step Lengths}

To find step lengths $\alpha_\cP, \alpha_\cD$, we must be able to compute $\a(M,dM)$ where $M$ is a positive semidefinite matrix and $\a(M,dM)$ is the largest positive real number such that $M+\a(M,dM) dM$ is positive semidefinite.  Let $M=LL^T$ be a Cholesky decomposition of $M$.  Since $M+\a dM=L(\mathbf{1}+\a L^{-1}dM L^{-T})L^T$, we have
\be
\alpha(M,dM) &=& -1/\textrm{min-eigenvalue}(L^{-1} dM L^{-T}).
\ee
In \SDPB, we compute all the eigenvalues of $L^{-1} dM L^{-T}$ using a QR decomposition and then simply take the minimum.  Some solvers, like \texttt{SDPA}, implement the more efficient Lanczos method \cite{Lanczos50aniterative} for computing the minimum eigenvalue.  In practice, the step-length calculation is a small part of the total running time, so we have neglected this optimization.

\subsection{Implementation}

\SDPB\ is approximately 3500 lines of C++.  It uses the GNU Multiprecision Library (GMP) \cite{Granlund12} for arbitrary precision arithmetic, and MPACK \cite{MPACK} for arbitrary precision linear algebra.  The relevant MPACK source files are included with \SDPB, with some slight modifications:
\begin{itemize}
\item The Cholesky decomposition routine {\tt Rpotrf} has been modified to implement the stabilization procedure described in footnote~\ref{footnote:stabilize}.

\item Some loops in the LU decomposition routine {\tt Rgetrf} have been parallelized.

\end{itemize}
\SDPB\ also depends on the Boost C++ libraries \cite{BoostSite} and the parsing library \texttt{tinyxml2} \cite{TINYXML2}.

Previous experience shows that high-precision arithmetic is important for accurately solving bootstrap optimization problems.  It is not fully understood why.  The na\"ive reason is that derivatives $\ptl_z^m \ptl_{\bar z}^n g_{\Delta,\ell}(z,\bar z)$  of conformal blocks vary by many orders of magnitude relative to each other as $\Delta$ varies.  It is not possible to scale away this large variation, and answers may depend on near cancellation of large numbers.  In practice, the matrix manipulations in our interior point algorithm ``leak" precision, so that the search direction $(dx,dX,dy,dY)$ is less precise than the initial point $(x,X,y,Y)$.  By increasing the precision of the underlying arithmetic, the search direction can be made reliable again.  This strategy (which we adopt) comes at a cost of increased runtime and memory usage.  Better strategies for dealing with numerical instabilities in bootstrap problems could bring enormous gains.

\SDPB\ is parallelized with OpenMP \cite{openmp08}.  Because most matrices appearing in the interior point algorithm are block-diagonal, most computations are ``embarrassingly parallel:" different blocks can be distributed to different threads.  (The most prominent exception is the LU decomposition of $Q'$, which is why \texttt{Rgetrf} was modified.)  Consequently, performance scales nearly linearly with the number of cores, as long as the number of matrix blocks is sufficiently large.  This is usually the case for interesting bootstrap problems, where $J$ (which sets the number of blocks) is typically much larger than the number of available cores.  It should be possible to achieve favorable scaling up to dozens or even hundreds of cores using MPI and more careful memory management.  Further scaling should be possible with more fine-grained parallelization.

\SDPB\ is available online at \href{https://github.com/davidsd/sdpb}{\texttt{https://github.com/davidsd/sdpb}} under the MIT license.  The source code is carefully commented and written for readability (to the extent that C++ code is ever readable).  We hope this will encourage customization and improvement.

\section{Example Application: 3d Ising Critical Exponents}
\label{sec:applicationtobootstrap}

\subsection{A 3d Ising Optimization Problem}

Bootstrap optimization problems can be naturally approximated as PMPs \cite{Poland:2011ey,Kos:2014bka}.  In this section, we review the PMP for the system of correlators $\{\<\s\s\s\s\>,\<\s\s\e\e\>,\<\e\e\e\e\>\}$ in the 3d Ising CFT.  We will be brief.  Much more detail is given in \cite{Kos:2014bka}.

Associativity of the Operator Product Expansion (OPE) for $\{\<\s\s\s\s\>,\<\s\s\e\e\>,\<\e\e\e\e\>\}$ implies the consistency condition
\be
\label{eq:crossingequationwithv}
\begin{pmatrix}1 & 1\end{pmatrix} \vec{V}_{+,0,0}\begin{pmatrix} 1 \\ 1 \end{pmatrix} + \sum_{\cO^+} \begin{pmatrix}\l_{\s\s\cO} & \l_{\e\e\cO}\end{pmatrix} \vec{V}_{+,\De,\ell}\begin{pmatrix} \l_{\s\s\cO} \\ \l_{\e\e\cO} \end{pmatrix}+ \sum_{\cO^-} \l_{\s\e\cO}^2 \vec{V}_{-,\De,\ell} & = & 0.
\ee
Here, $\cO^+$ runs over $\Z_2$-even operators of even spin and $\cO^-$ runs over $\Z_2$-odd operators of any spin.  We have separated out the unit operator.  $\De$ and $\ell$ are the dimension and spin of $\cO$, respectively.  The object $\vec{V}_{-,\De,\ell}$ is a 5-vector and $\vec{V}_{+,\De,\ell}$ is a 5-vector of $2 \times 2$ matrices
\bea
\vec{V}_{-,\De,\ell} = \begin{pmatrix} 0  \\ 0 \\ F_{-,\De,\ell}^{\s\e,\s\e}(u,v) \\ (-1)^{\ell} F_{-,\De,\ell}^{\e\s,\s\e}(u,v) \\ - (-1)^{\ell} F_{+,\De,\ell}^{\e\s,\s\e}(u,v) \end{pmatrix}, && 
\vec{V}_{+,\De,\ell} = \begin{pmatrix} \begin{pmatrix}  F^{\s\s,\s\s}_{-,\De,\ell}(u,v) & 0 \\ 0 & 0  \end{pmatrix} \\ \begin{pmatrix}  0 & 0 \\ 0 & F^{\e\e,\e\e}_{-,\De,\ell}(u,v)  \end{pmatrix}\\ \begin{pmatrix}  0 & 0 \\ 0 & 0  \end{pmatrix}  \\ \begin{pmatrix}  0 & \frac12 F^{\s\s,\e\e}_{-,\De,\ell}(u,v) \\ \frac12 F^{\s\s,\e\e}_{-,\De,\ell}(u,v) & 0 \end{pmatrix} \\ \begin{pmatrix} 0 & \frac12 F^{\s\s,\e\e}_{+,\De,\ell}(u,v) \\ \frac12 F^{\s\s,\e\e}_{+,\De,\ell}(u,v) & 0  \end{pmatrix} \end{pmatrix},\qquad
\eea
with entries that are functions of conformal cross-ratios $u$ and $v$,
\be
F_{\pm,\De,\ell}^{ij,kl}(u,v)  &\equiv& v^{\frac{\De_k+\De_j}{2}} g_{\De,\ell}^{\De_{ij},\De_{kl}}(u,v) \pm  u^{\frac{\De_k+\De_j}{2}} g_{\De,\ell}^{\De_{ij},\De_{kl}}(v,u),\nn\\
\De_{ij} &\equiv& \De_i-\De_j.
\ee
The $g_{\De,\ell}^{\De_{ij},\De_{kl}}(v,u)$ are conformal blocks, which are known special functions.

The OPE coefficients $\l_{\s\s\cO},\l_{\s\e\cO}, \l_{\e\e\cO}$ and dimensions $\De$ are not known a priori.  Nonetheless, we can constrain them by understanding when it is possible for the terms in (\ref{eq:crossingequationwithv}) to sum to zero.  To do this, consider a 5-vector of functionals $\vec \alpha =(\alpha^1, \dots,\alpha^5)$, where each $\a^i$ acts on the space of functions of $u$ and $v$.  Acting on (\ref{eq:crossingequationwithv}) with $\vec\a$ gives
\be
\label{eq:dualcrossingequationwithv}
\begin{pmatrix} 1 & 1\end{pmatrix} \vec \alpha \cdot \vec{V}_{+,0,0} \begin{pmatrix} 1 \\ 1 \end{pmatrix} + \sum_{\cO^+} \begin{pmatrix} \l_{\s\s\cO} & \l_{\e\e\cO}\end{pmatrix} \vec \alpha \cdot \vec{V}_{+,\De,\ell} \begin{pmatrix} \l_{\s\s\cO} \\ \l_{\e\e\cO} \end{pmatrix} + \sum_{\cO^-} \l_{\s\e\cO}^2 \vec \alpha \cdot \vec{V}_{-,\De,\ell} = 0 . \nn\\
\ee

The bootstrap logic, in the spirit of \cite{Rattazzi:2008pe}, is as follows.  First we make an assumption about which $\De,\ell$ appear in (\ref{eq:dualcrossingequationwithv}).  We then search for a functional $\vec\a$ such that
\bea
\label{eq:functionalinequalities}
\begin{pmatrix} 1 & 1\end{pmatrix} \vec \alpha \cdot \vec{V}_{+,0,0} \begin{pmatrix} 1 \\ 1 \end{pmatrix}  &>& 0 , \nn\\
\vec \alpha \cdot \vec{V}_{+,\De,\ell} &\succeq &0 ,\quad\textrm{for all $\mathbb{Z}_2$-even operators with even spin,} \nn\\
\vec \alpha \cdot \vec{V}_{-,\De,\ell} & \ge &0 ,\quad \textrm{for all $\mathbb{Z}_2$-odd operators of any spin.} 
\eea
The OPE coefficients $\l_{ijk}$ are real in a unitary CFT.  Thus, if $\vec\a$ exists, then it is impossible to satisfy the consistency condition (\ref{eq:crossingequationwithv}), and our assumption is ruled out.  By making different assumptions and searching for functionals $\vec\a$, we can map out the space of allowed $\De$.

\subsection{Approximation as a PMP}

The conditions (\ref{eq:functionalinequalities}) define a feasibility problem with an infinite number of semidefiniteness constraints (one for each $\De,\ell$).  To obtain a PMP, we choose a particular type of functional
\be
\label{eq:alphaintermsofderivatives}
\alpha^i [ f ] = \sum_{\substack{m \geq n \\ m+n\leq\Lambda}} \left. a^i_{mn}\partial_z^m \partial_{\bar z}^n f(z, \bar z)\right|_{z = \bar z = \frac 1 2},
\ee
where $u=z\bar z$, $v=(1-z)(1-\bar z)$.  Although the bootstrap logic does not depend on the types of functionals considered, only functionals of the form (\ref{eq:alphaintermsofderivatives}) lead to a PMP.  Other types of functionals require different optimization methods.

Derivatives of conformal blocks have a systematic approximation in terms of positive functions times polynomials,
\be
\left.\partial_z^m \partial_{\bar z}^n g_{\De,\ell}^{\De_{12},\De_{34}}(z, \bar z)\right|_{z = \bar z = \frac 1 2}\aeq \chi_{\ell}(\De) p_\ell^{\De_{12},\De_{34};mn}(\Delta),
\label{eq:polynomialapproximationforblock}
\ee
where $p_\ell^{\De_{12},\De_{34};mn}(\De)$ are polynomials and $\chi_{\ell}(\De)$ are functions that are positive for all $\De$ in a unitary CFT. It follows that
\be
\left. \partial_z^m \partial_{\bar z}^n F_{\pm,\De,\ell}^{ij,kl}(z, \bar z)\right|_{z = \bar z = \frac 1 2} \aeq \chi_{\ell}(\De) P_{\pm,\ell}^{ij,kl;mn}(\De),
\ee
where $P_{\pm,\ell}^{ijkl;mn}(\De)$ are linear combinations of $p_{\ell}^{\De_{ij},\De_{kl};mn}(\De)$.  Using this approximation, and stripping off the positive factors $\chi_\ell(\De)$, (\ref{eq:functionalinequalities}) becomes a PMP:
\bea
\textrm{find $a_{mn}^i$ such that:}&&\nn\\
\begin{pmatrix} 1 & 1\end{pmatrix}Z_0(0) \begin{pmatrix} 1 \\ 1 \end{pmatrix}  &>& 0 , \nn\\
Z_{\ell}(\De) &\succeq &0 ,\quad\textrm{for all $\mathbb{Z}_2$-even operators with even spin,} \nn\\
Y_{\ell}(\De) & \ge &0 ,\quad \textrm{for all $\mathbb{Z}_2$-odd operators in the spectrum.} 
\label{eq:constraintsformixedcorrelator}
\eea
Here $Y_\ell(\De)$ are polynomials and $Z_\ell(\De)$ are polynomial matrices in $\De$ defined as
\be
Y_{\ell}(\Delta)  &\equiv& \sum_{mn} \left[ a^3_{mn} P^{\s\e,\s\e;mn}_{-,\ell}(\De) + a^4_{mn} (-1)^{\ell} P^{\e\s,\s\e,mn}_{-,\ell}(\De) - a^5_{mn} (-1)^{\ell} P^{\e\s,\s\e;mn}_{+, \ell}(\De) \right],  \notag\\
Z_{\ell}(\Delta) &\equiv&  \sum_{mn} \begin{pmatrix}a^1_{mn} P^{\s\s,\s\s;mn}_{-,\ell}(\De) & \frac12 \left( a^4_{mn} P_{-,\ell}^{\s\s,\e\e;mn}(\De) + a^5_{mn} P_{+,\ell}^{\s\s,\e\e;mn}(\De) \right)   \\ \frac12 \left( a^4_{mn} P_{-,\ell}^{\s\s,\e\e;mn}(\De) + a^5_{mn} P_{+,\ell}^{\s\s,\e\e;mn}(\De) \right) & a^2_{mn} P^{\e\e,\e\e;mn}_{-,\ell}(\De)\end{pmatrix} \notag.\\
\ee
Typically, we assume that $\De$ can vary arbitrarily above some minimum value $\De_\mathrm{min}(\ell)$.  Writing $\De=\De_\mathrm{min}(\ell)+x$, we have positive semidefiniteness for all $x\geq 0$.

There are two important differences between (\ref{eq:constraintsformixedcorrelator}) and our original PMP (\ref{eq:PMPconstraint}):
\begin{enumerate}
\item In (\ref{eq:PMPconstraint}) we have a finite number of positive semidefiniteness conditions $j=1,\dots,J$, whereas here we have an infinite number since $\ell$ can be any nonnegative integer.  In practice, we include spins $\ell$ up to some large but finite $\ell_{\max}$.  As long as $\ell_{\max}$ is large enough, a functional obtained by solving the problem with $\ell\leq \ell_{\max}$ should also satisfy positive semidefiniteness for spins $\ell>\ell_{\max}$.  The proper choice of $\ell_{\max}$ depends on the problem at hand, see appendix~\ref{sec:parameters}.  See \cite{Caracciolo:2014cxa} for a more careful analysis.

\item In (\ref{eq:PMPconstraint}) we are trying to optimize an objective function $b\. y$, whereas here we are only interested in feasibility.  To determine feasibility, we can pick the trivial objective function $b=0$ and run our interior point algorithm until it becomes dual feasible.

\end{enumerate}

\subsection{Setting Up \SDPB}

The natural objects entering our calculation are (approximate) derivatives of conformal blocks $\chi_{\ell}(\De) p_\ell^{\De_{12},\De_{34};mn}(\Delta)$, as opposed to just the polynomials $p_\ell^{\De_{12},\De_{34};mn}(\Delta)$.  Removing positive factors does not affect positive semidefiniteness, but it does affect the scaling of the resulting SDP.  Restoring quantities to their ``natural" size can improve numerical stability and performance.\footnote{A similar observation was made for the algorithm in \cite{El-Showk:2014dwa}.}  \SDPB\ provides a few different ways to implement this rescaling.

As we saw in section~\ref{sec:translationPMPtoSDP}, translating a PMP into an SDP requires a bilinear basis $q_m(x)$.  \SDPB\ allows a choice of bilinear basis $q_m^{(j)}(x)$ for each $j=1,\dots,J$.  For bootstrap problems, we take $q_m^{(j)}(x)$ to be orthogonal polynomials with respect to the norm
\be
\label{eq:candidatenorm}
\<p,q\>^{(j)} &=& \int_{0}^\oo dx\,\chi_\ell(\De_{\min}(\ell)+x) p(x) q(x),
\ee
where $\ell$ is the spin corresponding to $j$. The change of basis between orthogonal polynomials with respect to $\<\.,\.\>^{(j)}$ and monomials $x^m$ (used in previous bootstrap applications of SDP) is extremely ill-conditioned at high degree.  So although the choice of $q_m^{(j)}$ is unimportant in principle, it can have a dramatic effect on numerical stability.\footnote{We thank Pablo Parrilo for pointing out the usefulness of orthogonal polynomials in improving the numerical stability of polynomial optimization.}

\SDPB\ also requires a set of sample points $x^{(j)}_k$ at which to evaluate polynomials, as well as scaling factors $s^{(j)}_k$ that modify the constraints (\ref{eq:BforPMP}, \ref{eq:CforPMP}) as follows:
\be
B_{(j,r,s,k),n} &=& -s^{(j)}_k P_{j,rs}^{n}(x^{(j)}_k),\nn\\
c_{(j,r,s,k)} &=& s^{(j)}_k P^{0}_{j,rs}(x^{(j)}_k).
\ee
Additionally, the $A_{(j,r,s,k)}$ are given by (\ref{eq:specialconstraintmatrices}) with
\be
v_{2j-1,k} &=&  (s_k^{(j)})^{1/2} \vec q^{(j)}_{\de 1}(x^{(j)}_k),\nn\\
v_{2j,k} &=& (s_k^{(j)})^{1/2}(x^{(j)}_k)^{1/2} \vec q^{(j)}_{\de 2}(x^{(j)}_k).
\ee
This $s_k$-dependent rescaling gives an isomorphic, but potentially more numerically stable SDP.  For bootstrap problems, it is natural to pick $s_k^{(j)}=\chi_\ell(x_k^{(j)})$ where $\ell$ corresponds to $j$.  The $x_k^{(j)}$ can be any sequence of distinct points.  A natural choice are zeros of one of the $q_m^{(j)}$ of sufficiently high degree.

To summarize, \SDPB\ depends on the following input:
\begin{itemize}
\item for each $j=1,\dots,J$:
\begin{itemize}
\item polynomial matrices $M^0_j(x),\dots,M^N_j(x)$ of maximum degree $d_j$,
\item bilinear bases $q_m^{(j)}(x)$ ($m=0,\dots,\lfloor d_j/2\rfloor$),
\item sample points $x_k^{(j)}$ ($k=0,\dots,d_j$),
\item sample scalings $s_k^{(j)}$ ($k=0,\dots,d_j$),
\end{itemize}
\item an objective function $b\in \R^N$.
\end{itemize}

\SDPB\ reads this data in an {\tt XML} format described in the manual.  A {\tt Mathematica} package  that translates the above data from {\tt Mathematica} expressions into {\tt XML} is included with the source distribution.  An example 2d bootstrap computation is also included.  More details about \SDPB's input and output formats and its various settings can be found in the manual.

\subsection{Results}
\label{sec:example3dIsing}

As an application of \SDPB, let us improve upon the determinations of 3d Ising critical exponents in \cite{El-Showk:2014dwa,Kos:2014bka}.  We make an exclusion plot for the operator dimensions $(\Delta_\s,\Delta_\e)$ as follows.  Fix $(\Delta_\s,\Delta_\e)$ and use \SDPB\ to determine if the PMP (\ref{eq:constraintsformixedcorrelator}) is dual-feasible, i.e. whether there exist $(y,Y)$ satisfying their associated constraints.  If the PMP is dual-feasible, then the given $(\Delta_\s,\Delta_\e)$ are excluded.  If the PMP is not dual-feasible, then we cannot conclude anything about $(\Delta_\s,\Delta_\e)$.  By scanning over different points, we map out the excluded region in $(\Delta_\s,\Delta_\e)$ space.

To determine dual feasibility, we use a vanishing objective function $b=0$ and run \SDPB\ with the option \texttt{--findDualFeasible}.  This terminates the solver if $(y,Y)$ are found satisfying their constraints to sufficient precision (i.e. if $\texttt{dualError}<\texttt{dualErrorThreshold}$).\footnote{If we kept running the solver, \texttt{primalObjective} would converge towards $\texttt{dualObjective}=0$ and an optimum would eventually be reached.}  In practice, if \SDPB\ finds a primal feasible solution $(x,X)$ after some number of iterations, then it will never eventually find a dual feasible one.  Thus, we additionally include the option \texttt{--findPrimalFeasible}, which terminates the solver whenever $\texttt{primalError}<\texttt{primalErrorThreshold}$.  The termination status of \SDPB\ then determines whether a point is allowed or not:
\be
\begin{array}{rcl}
\texttt{found dual feasible solution} & \implies & \textrm{$(\Delta_\s,\Delta_\e)$ disallowed},\\
\texttt{found primal feasible solution} & \implies & \textrm{$(\Delta_\s,\Delta_\e)$ allowed}.
\end{array}
\ee
The precise \SDPB\ options used for the computations in this work are described in appendix~\ref{sec:parameters}.

\begin{figure}[t!]
\begin{center}
\begin{psfrags}
\def\PFGstripminus-#1{#1}%
\def\PFGshift(#1,#2)#3{\raisebox{#2}[\height][\depth]{\hbox{%
  \ifdim#1<0pt\kern#1 #3\kern\PFGstripminus#1\else\kern#1 #3\kern-#1\fi}}}%
\providecommand{\PFGstyle}{}%
\psfrag{title}[cc][cc]{\PFGstyle $\text{allowed region for $\Lambda=19,27,35,43$}$}
\psfrag{xLabel}[cl][cl]{\PFGstyle $\text{$\De_\s$}$}
\psfrag{yLabel}[bc][bc]{\PFGstyle $\text{$\De_\e$}$}
\psfrag{x051805}[tc][tc]{\PFGstyle $0.51805$}
\psfrag{x05181}[tc][tc]{\PFGstyle $0.5181$}
\psfrag{x051815}[tc][tc]{\PFGstyle $0.51815$}
\psfrag{x05182}[tc][tc]{\PFGstyle $0.5182$}
\psfrag{x051825}[tc][tc]{\PFGstyle $0.51825$}
\psfrag{x05183}[tc][tc]{\PFGstyle $0.5183$}
\psfrag{x051835}[tc][tc]{\PFGstyle $0.51835$}
\psfrag{y14115}[cr][cr]{\PFGstyle $1.4115$}
\psfrag{y1412}[cr][cr]{\PFGstyle $1.412$}
\psfrag{y14125}[cr][cr]{\PFGstyle $1.4125$}
\psfrag{y1413}[cr][cr]{\PFGstyle $1.413$}
\psfrag{y14135}[cr][cr]{\PFGstyle $1.4135$}
\psfrag{y1414}[cr][cr]{\PFGstyle $1.414$}
\psfrag{y14145}[cr][cr]{\PFGstyle $1.4145$}
\includegraphics[width=0.95\textwidth]{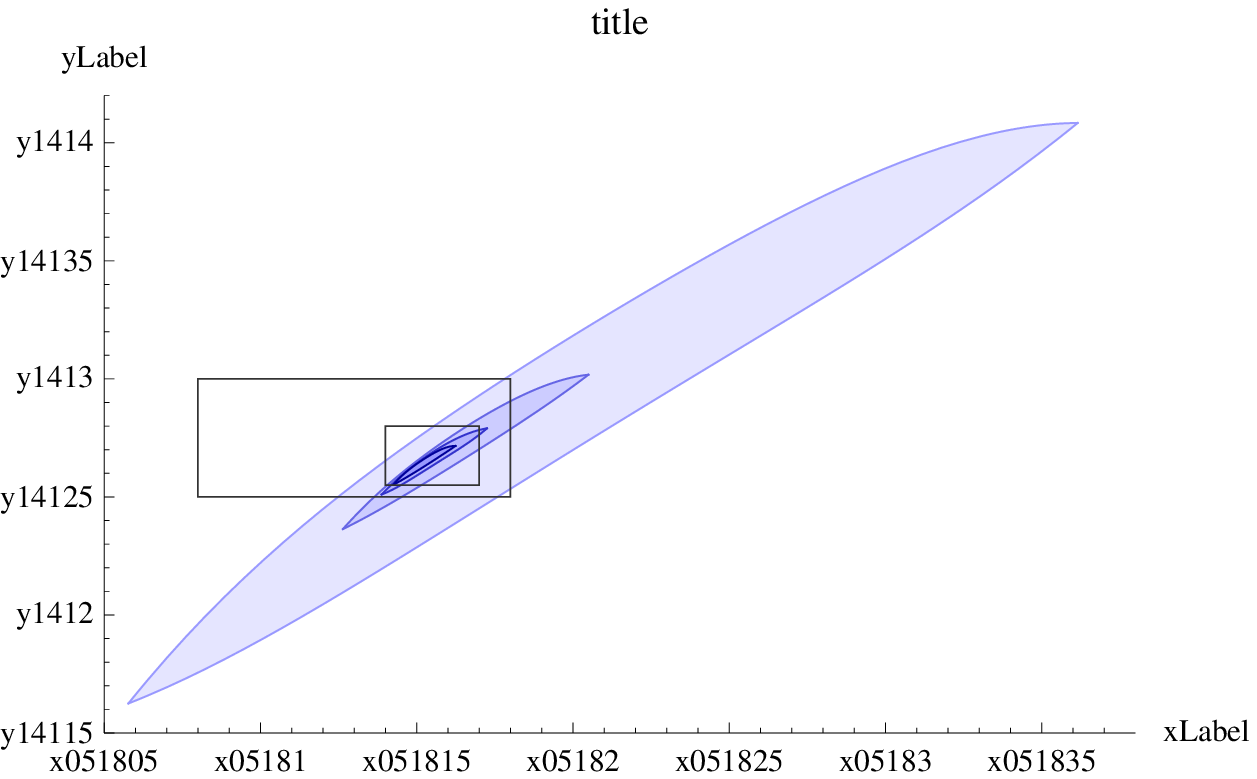}
\end{psfrags}
\caption{Allowed region for a $\Z_2$-symmetric 3d CFT with two relevant scalars, computed using \SDPB\ with the system of correlators $\<\s\s\s\s\>,\<\s\s\e\e\>,$ and $\<\e\e\e\e\>$.  The blue regions correspond to $\Lambda=19,27,35,43$, in decreasing order of size.  The larger black rectangle shows the current most precise Monte Carlo determinations of critical exponents in the 3d Ising CFT \cite{Hasenbusch:2011yya}.  The smaller black rectangle shows the estimate for $(\Delta_\s,\Delta_\e)$ using $c$-minimization at $\Lambda=41$ for the single correlator $\<\s\s\s\s\>$ \cite{El-Showk:2014dwa}.}
\label{fig:multicorrelatorRegionDifferentNmax}
\end{center}
\end{figure}

In figure~\ref{fig:multicorrelatorRegionDifferentNmax}, we plot the allowed regions for different numbers of derivatives labeled by $\Lambda=19,27,35,43$,\footnote{$\nmax=10,14,18,22$ in the notation of \cite{Kos:2014bka}.} corresponding to functionals $\vec\alpha$ of dimension 275, 525, 855, and 1265, respectively.\footnote{A performance analysis for different values of $\Lambda$ is given in appendix~\ref{app:runningtime}.}   We focus on $(\Delta_\s,\Delta_\e)$ near the 3d Ising CFT, leaving wider exploration to the future.  The allowed region is an island that shrinks rapidly with increasing $\L$.\footnote{Each allowed region plotted in this work was computed by testing a grid of points and fitting curves to the boundary between allowed and disallowed gridpoints.  The raw gridpoint data is available on request.}  The largest island, corresponding to $\Lambda=19$ is the same as the allowed region in figure~5 of \cite{Kos:2014bka}.  We can estimate the point towards which the islands shrink as follows.  Let $(a_\L,b_\L)$ be the bottom-left point of the $\L$-allowed island, and similarly let $(c_\L,d_\L)$ be the top-right point.  Define 
\be
E_x(r)&=&\mathrm{stddev}_{\L\in\{19,27,35,43\}}(ra_\L+(1-r)c_\L),\nn\\
E_y(r)&=&\mathrm{stddev}_{\L\in\{19,27,35,43\}}(rb_\L+(1-r)d_\L),
\ee
and let $r_x,r_y$ be the minima of $E_x, E_y$ respectively.  Our estimate is\footnote{For readers interested in numerology, we recommend \cite{ISC}.  However, see \cite{PhysRev.82.554.2}.}
\be
(\De_\s,\De_\e)&\approx& (r_x a_{43}+(1-r_x)c_{43}, r_y b_{43} + (1-r_y) d_{43})\nn\\
&\approx& (0.5181478(5), 1.412617(4)),
\ee
where the errors are given by $E_x(r_x), E_y(r_y)$.

The dimensions $(\Delta_\s,\Delta_\e)$ were estimated in \cite{El-Showk:2014dwa} using the conjecture that the 3d Ising CFT minimizes $c \equiv \Delta_\s^2/\lambda_{\s\s T_{\mu\nu}}$ subject to the constraints of unitarity and crossing symmetry of $\<\s\s\s\s\>$.  This conjecture, called ``$c$-minimization," is expected to be equivalent to the assumption that the 3d Ising CFT lives precisely at the kink in the dimension bound coming from the single correlator $\<\s\s\s\s\>$.  Although unproven, $c$-minimization's advantage is that it allows one to estimate $(\Delta_\s,\Delta_\e)$ using a single scalar correlator $\<\s\s\s\s\>$. Bootstrap computations for a single scalar correlator can be made relatively efficient with a modified primal simplex algorithm \cite{El-Showk:2014dwa,Paulos:2014vya}.

An advantage of multiple correlators is that it is possible to impose the condition that $\s$ is the only relevant $\Z_2$-odd operator, causing the allowed region in $(\Delta_\s,\Delta_\e)$-space to become a closed island, independent of auxiliary assumptions \cite{Kos:2014bka}.  Because our $\Lambda=43$ island lies within the error bars of \cite{El-Showk:2014dwa}, our results verify $c$-minimization to the precision achieved in \cite{El-Showk:2014dwa}.\footnote{To see this more explicitly, it should be possible to place both upper and lower bounds on $c$ using \SDPB\ and see that it is constrained to lie close to the minimum computed in \cite{El-Showk:2014dwa}.  We leave this to future work.}  Our results also give further evidence for the conjecture that the 3d Ising CFT is the unique $\Z_2$-symmetric 3d CFT with two relevant scalars and $\Delta_\s \lesssim 0.6$. (The precise condition on $\Delta_\s,\Delta_\e$ depends on the shape of the allowed region further away from the 3d Ising point.)  It would be very interesting to prove these conjectures analytically, perhaps by showing that the island in figure~\ref{fig:multicorrelatorRegionZoom} shrinks to a point as $\Lambda\to\oo$.  An alternative is that the conjectures are still true, but one needs information from other four-point functions to prove them.

\begin{figure}[t!]
\begin{center}
\begin{psfrags}
\def\PFGstripminus-#1{#1}%
\def\PFGshift(#1,#2)#3{\raisebox{#2}[\height][\depth]{\hbox{%
  \ifdim#1<0pt\kern#1 #3\kern\PFGstripminus#1\else\kern#1 #3\kern-#1\fi}}}%
\providecommand{\PFGstyle}{}%
\psfrag{title}[cc][cc]{\PFGstyle $\text{allowed region for $\Lambda=43$, using three-point symmetry}$}
\psfrag{xLabel}[cl][cl]{\PFGstyle $\text{$\De_\s$}$}
\psfrag{yLabel}[bc][bc]{\PFGstyle $\text{$\De_\e$}$}
\psfrag{x051813}[tc][tc]{\PFGstyle $0.51813$}
\psfrag{x051814}[tc][tc]{\PFGstyle $0.51814$}
\psfrag{x051815}[tc][tc]{\PFGstyle $0.51815$}
\psfrag{x051816}[tc][tc]{\PFGstyle $0.51816$}
\psfrag{x051817}[tc][tc]{\PFGstyle $0.51817$}
\psfrag{y14125}[cr][cr]{\PFGstyle $1.4125$}
\psfrag{y141255}[cr][cr]{\PFGstyle $1.41255$}
\psfrag{y14126}[cr][cr]{\PFGstyle $1.4126$}
\psfrag{y141265}[cr][cr]{\PFGstyle $1.41265$}
\psfrag{y14127}[cr][cr]{\PFGstyle $1.4127$}
\psfrag{y141275}[cr][cr]{\PFGstyle $1.41275$}
\psfrag{y14128}[cr][cr]{\PFGstyle $1.4128$}
\psfrag{y141285}[cr][cr]{\PFGstyle $1.41285$}
\includegraphics[width=0.95\textwidth]{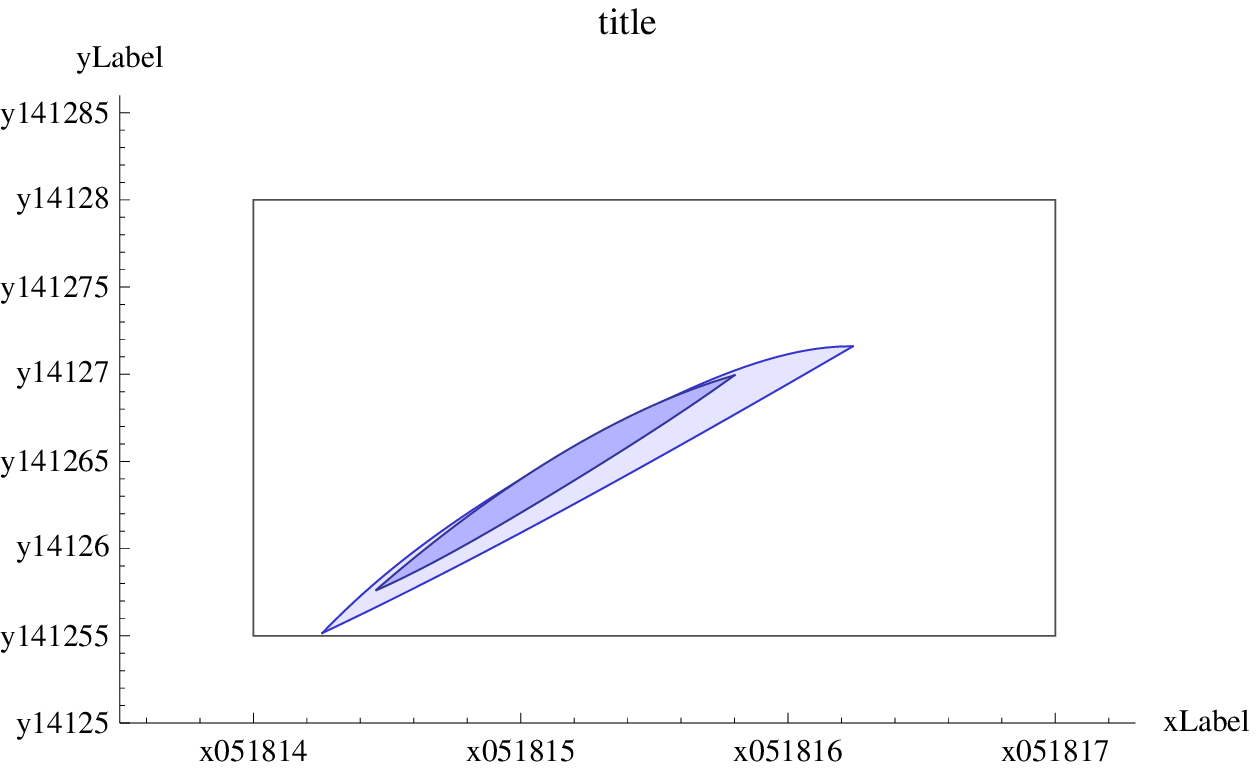}
\end{psfrags}
\caption{Allowed region for a $\Z_2$-symmetric 3d CFT with two relevant operators, computed with \SDPB\ at $\Lambda=43$.  The light-blue region is a zoom of the smallest region in figure~\ref{fig:multicorrelatorRegionDifferentNmax}.  The darker-blue region additionally uses symmetry of the OPE coefficients $\l_{\s\s\e}=\l_{\s\e\s}$.  The black rectangle shows the estimate for $(\Delta_\s,\Delta_\e)$ using $c$-minimization at $\Lambda=41$ \cite{El-Showk:2014dwa}.}
\label{fig:multicorrelatorRegionZoom}
\end{center}
\end{figure}

The analysis of \cite{Kos:2014bka} did not use permutation symmetry of the OPE coefficient $\lambda_{\s\s\e}=\l_{\s\e\s}$.\footnote{Note that permutation symmetry holds only when the conformal blocks are correctly normalized as functions of $\Delta$.  For scalars, the correct normalization is $g_{\Delta,0}(u,v)=u^{\Delta/2}+\dots$ to leading order in $u$, up to a $\Delta$-independent constant.}  Including this constraint leads to an additional modest reduction in the allowed region,\footnote{This fact was discovered during collaboration with Filip Kos, David Poland, and Alessandro Vichi \cite{ONFuture}.} which we plot in figure~\ref{fig:multicorrelatorRegionZoom} at $\Lambda=43$.  The resulting island gives a rigorous determination $\De_\s=0.518151(6)$, $\De_\e=1.41264(6)$, which is 5-10 times more precise than the Monte Carlo results of \cite{Hasenbusch:2011yya}.  We summarize the comparison to Monte Carlo in figure~\ref{multicorrelatorRegionMCComparison}.

\begin{figure}[t!]
\begin{center}
\begin{psfrags}
\def\PFGstripminus-#1{#1}%
\def\PFGshift(#1,#2)#3{\raisebox{#2}[\height][\depth]{\hbox{%
  \ifdim#1<0pt\kern#1 #3\kern\PFGstripminus#1\else\kern#1 #3\kern-#1\fi}}}%
\providecommand{\PFGstyle}{}%
\psfrag{title}[cc][cc]{\PFGstyle $\text{comparison to Monte Carlo}$}
\psfrag{MC}[cc][rc]{\PFGstyle $\text{\ \ \ \ \ \ \ \ Monte Carlo}$}
\psfrag{xLabel}[cl][cl]{\PFGstyle $\text{$\De_\s$}$}
\psfrag{yLabel}[bc][bc]{\PFGstyle $\text{$\De_\e$}$}
\psfrag{x0518}[tc][tc]{\PFGstyle $0.518$}
\psfrag{x051802}[tc][tc]{\PFGstyle $0.51802$}
\psfrag{x051804}[tc][tc]{\PFGstyle $0.51804$}
\psfrag{x051806}[tc][tc]{\PFGstyle $0.51806$}
\psfrag{x051808}[tc][tc]{\PFGstyle $0.51808$}
\psfrag{x05181}[tc][tc]{\PFGstyle $0.5181$}
\psfrag{x051812}[tc][tc]{\PFGstyle $0.51812$}
\psfrag{x051814}[tc][tc]{\PFGstyle $0.51814$}
\psfrag{x051816}[tc][tc]{\PFGstyle $0.51816$}
\psfrag{x051818}[tc][tc]{\PFGstyle $0.51818$}
\psfrag{x05182}[tc][tc]{\PFGstyle $0.5182$}
\psfrag{y1412}[cr][cr]{\PFGstyle $1.412$}
\psfrag{y14121}[cr][cr]{\PFGstyle $1.4121$}
\psfrag{y14122}[cr][cr]{\PFGstyle $1.4122$}
\psfrag{y14123}[cr][cr]{\PFGstyle $1.4123$}
\psfrag{y14124}[cr][cr]{\PFGstyle $1.4124$}
\psfrag{y14125}[cr][cr]{\PFGstyle $1.4125$}
\psfrag{y14126}[cr][cr]{\PFGstyle $1.4126$}
\psfrag{y14127}[cr][cr]{\PFGstyle $1.4127$}
\psfrag{y14128}[cr][cr]{\PFGstyle $1.4128$}
\psfrag{y14129}[cr][cr]{\PFGstyle $1.4129$}
\psfrag{y1413}[cr][cr]{\PFGstyle $1.413$}
\psfrag{y14131}[cr][cr]{\PFGstyle $1.4131$}
\psfrag{y14132}[cr][cr]{\PFGstyle $1.4132$}
\psfrag{y14133}[cr][cr]{\PFGstyle $1.4133$}
\psfrag{y14134}[cr][cr]{\PFGstyle $1.4134$}
\psfrag{y14135}[cr][cr]{\PFGstyle $1.4135$}
\includegraphics[width=0.9\textwidth]{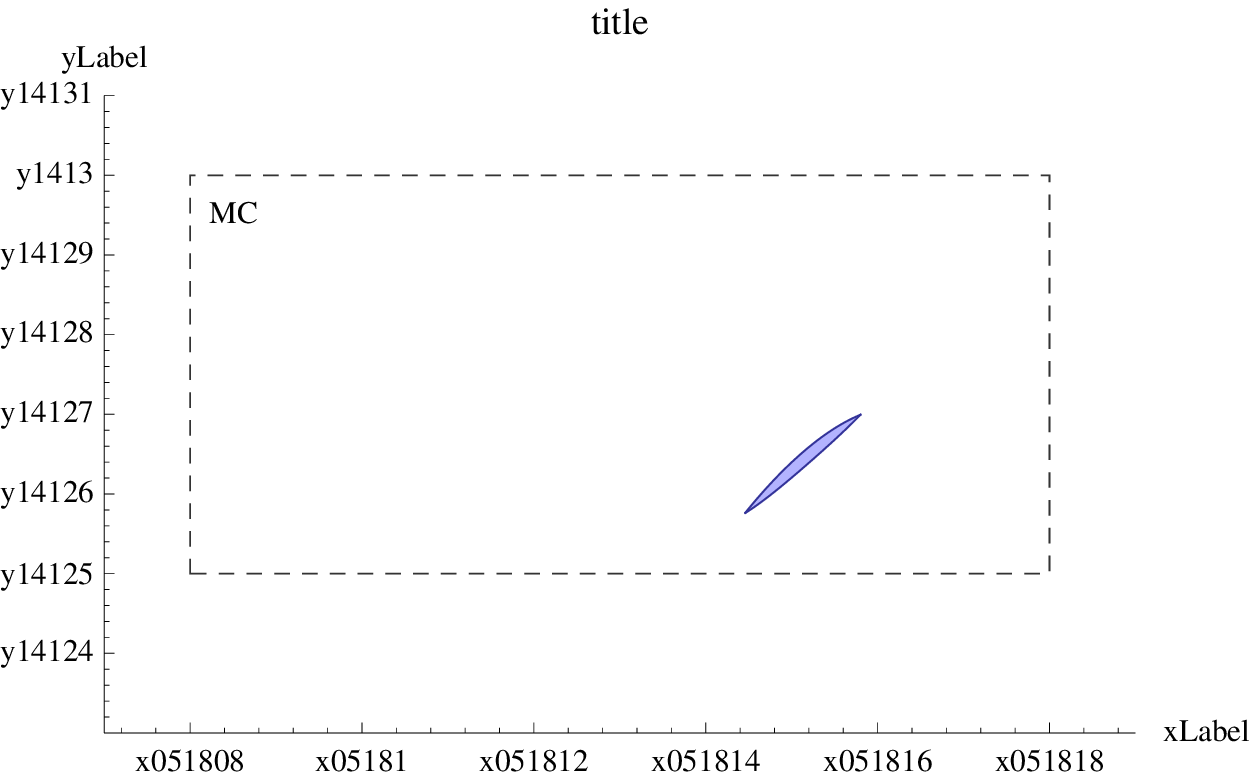}
\end{psfrags}
\caption{Comparison between the allowed region for the 3d Ising CFT using \SDPB\ with $\Lambda=43$ (blue) and Monte Carlo determinations of critical exponents (dashed rectangle) \cite{Hasenbusch:2011yya}.  The size of the Monte Carlo rectangle is set by statistical and systematic errors associated with the simulation.  By contrast, the blue region is a rigorous bound with sharp edges.}
\label{multicorrelatorRegionMCComparison}
\end{center}
\end{figure}

\section{Discussion}
\label{sec:discussion}

With \SDPB, we have significantly improved the precision of $(\De_\s,\De_\e)$ in the 3d Ising CFT.  Our numerics indicate that the window of allowed dimensions may shrink to a point in the limit of infinite computer time.  In other words, they suggest that consistency of the correlators $\{\<\s\s\s\s\>,\<\s\s\e\e\>,\<\e\e\e\e\>\}$, together with the assumption that $\s$ and $\e$ are the only relevant scalars in the theory, may uniquely fix the dimensions $(\De_\s,\De_\e)$. This conjecture could be more tractable analytically than trying to solve the full CFT consistency conditions. 

There are many more 3d Ising observables to explore.  For example, the coefficient $f_{\e\e\e}$ should be computable, and it would be interesting to compare with the recent Monte Carlo prediction \cite{Caselle:2015csa}.  It will also be important to consider larger systems of 3d Ising correlators.

However, \SDPB\ should also enable wider exploration of new correlators and diverse theories.  \SDPB\ is already being used in several bootstrap studies that would have previously been difficult or impossible \cite{ONFuture,FermionFuture}.  An exciting direction that may now be accessible is studying a four-point function of stress-tensors in 3d CFTs.

In addition to the four-point function bootstrap, semidefinite programming has also recently been applied to the ``modular bootstrap" in 2d CFTs \cite{Hellerman:2009bu,Friedan:2013cba}.  \SDPB\ is equally applicable to modular bootstrap computations, since they too can be phrased in terms of polynomial matrix programs.

From the computing point of view, there are many opportunities for improvement.  For example, it should be possible to parallelize \SDPB\ up to hundreds of cores, which could lead to even more precise calculations, and (just as importantly) easier exploration.  Very different algorithms, like Second Order Conic Programming (SOCP), cutting plane methods, or constrained nonlinear optimization may also be applicable.

The revival initiated in \cite{Rattazzi:2008pe} is still young, and the technology (both analytical and numerical) is evolving rapidly.  Current techniques are likely not maximally efficient, and it will be important to consider other methods, from new algorithms and optimization tools to conceptually different approaches.  We are optimistic that much more will be possible.

\section*{Acknowledgements}
I am grateful to Chris Beem, Luca Iliesiu, Silviu Pufu, Slava Rychkov, Balt van Rees, and Ran Yacoby for related discussions, and especially Filip Kos, David Poland, and Alessandro Vichi for discussions, collaboration, assistance testing \SDPB, and comments on the draft.  Thanks also to Slava Rychkov for comments on the draft. Thanks to Amir Ali Ahmadi, Hande Benson, Pablo Parrilo, and Robert Vanderbei for advice on semidefinite programming and numerical optimization.  I am supported by DOE grant number DE-SC0009988 and a William D. Loughlin Membership at the Institute for Advanced Study.
The computations in this paper were run on the Hyperion computing cluster supported by the School of Natural Sciences Computing Staff at the Institute for Advanced Study.

\appendix

\section{Choices and Parameters}
\label{sec:parameters}

The PMP for $\{\<\s\s\s\s\>, \<\s\s\e\e\>, \<\e\e\e\e\>\}$ in the 3d Ising CFT depends on the following choices:
\begin{itemize}
\item An integer $\Lambda$ specifying which derivatives to include in the functional $\vec \alpha$.  These are given by $\ptl_z^m\ptl_{\bar z}^n$ with $m\geq n$ and $m+n\leq \Lambda$.\footnote{$\Lambda=2\nmax-1$, where $\nmax$ is the parameter defined in \cite{Kos:2014bka}.}  Because of the symmetry/antisymmetry of $F_{\pm,\De,\ell}^{ij,kl}(u,v)$ under $u\leftrightarrow v$, some of these derivatives vanish identically.  Including only non-vanishing derivatives, the dimension of $\vec\alpha$ is
\be
\dim \vec\alpha &=& \frac{\lfloor\frac{\Lambda+2}{2}\rfloor(\lfloor\frac{\Lambda+2}{2}\rfloor+1)}{2}+4\frac{\lfloor\frac{\Lambda+1}{2}\rfloor(\lfloor\frac{\Lambda+1}{2}\rfloor+1)}{2}.
\ee

\item An integer $\numax$ controlling the accuracy of the approximation for conformal blocks (\ref{eq:polynomialapproximationforblock}).  The positive prefactor is
\be
\chi_\ell(\Delta) &=& \frac{r_*^\Delta}{\prod_i(\Delta-\Delta_{*i})} \ \ \geq\ \  0,
\ee
where $r_*=3-2\sqrt 2$ is the radius of the point $z=\bar z = \frac 1 2$ in the radial coordinates of \cite{Pappadopulo:2012jk,Hogervorst:2013sma}.  The dimensions $\Delta_{*i}$ are special values {\it below} the unitarity bound, so that the product $\prod_i(\Delta-\Delta_{*i})$ is positive for unitary theories.  The approximation (\ref{eq:polynomialapproximationforblock}) can be systematically improved by including more poles $(\Delta-\Delta_{*i})^{-1}$ and increasing the degree of $p^{\Delta_{ij},\Delta_{kl};mn}_\ell(\Delta)$.
Our choice of poles is
\be
\De_{*i} &\in& \left\{\begin{array}{lcl}
1-\ell-k & | & k=1,\dots,\numax,\\
d/2-k, & | & k=1,\dots,\lfloor\numax/2\rfloor,\\
\ell+d-1-k & | & k=1,\dots,\mathrm{min}(\numax, \lfloor\ell/2\rfloor)
\end{array}\right\}.
\ee

Smaller $\numax$ means smaller-degree polynomials and shorter runtimes.  Larger $\numax$ is needed to get an accurate approximation for conformal blocks.  We choose $\numax$ by computing bounds with successively larger values of $\numax$ until the results stabilize.  Our final values are conservative: smaller $\kappa$ may still give sufficient accuracy.  Derivatives of conformal blocks were computed using the recursion relation in \cite{Kos:2014bka} to order $r^{90}$ (far greater accuracy than needed).

\item A set of spins $S=\{\ell_1,\dots,\ell_L\}$ to include.  If not enough spins are included, the solver may find a functional $\vec\alpha$ that violates a positive semidefiniteness constraint for some spin.  Because derivatives of conformal blocks converge as a function of spin, in practice it is sufficient to include a finite number of spins to ensure $\vec\alpha$ satisfies the constraints for all spins (as can be verified post-hoc by testing $\vec\alpha$ on constraints that were not explicitly included).  This sufficient number of spins grows with $\Lambda$.  Our choices are given in~(\ref{eq:spinsets}).
\be
\label{eq:spinsets}
S_{\L=19} &=& \{0,\dots,26\} \cup \{49,50\}\nn\\
S_{\L=27} &=& \{0,\dots,26\} \cup \{29,30,33,34,37,38,41,42,45,46,49,50\}\nn\\
S_{\L=35} &=& \{0,\dots,44\} \cup \{47,48,51,52,55,56,59,60,63,64,67,68\}\nn\\
S_{\L=43} &=& \{0,\dots,64\} \cup \{67,68,71,72,75,76,79,80,83,84,87,88\}.
\ee

\end{itemize}
Once these quantities are fixed, the parameters to \SDPB\ must be chosen to ensure numerical stability, precision, and correctness.  Our choices for the computations in this work are summarized in table~\ref{tab:parameters}.

\begin{table}[!ht]
\centering
\begin{tabular}{|l|c|c|c|c|}
\hline
$\Lambda$ & 19 & 27 & 35 & 43 \\
$\numax$ & 14 & 20 & 30 & 40 \\
spins & $S_{\L=19}$ & $S_{\L=27}$ & $S_{\L=35}$ & $S_{\L=43}$\\
\texttt{precision} & 448 & 576 & 768 & 960\\
\texttt{findPrimalFeasible} & True & True & True & True\\
\texttt{findDualFeasible} & True & True & True & True\\
\texttt{detectPrimalFeasibleJump} & True & True & True & True\\
\texttt{detectDualFeasibleJump} & True & True & True & True\\
\texttt{dualityGapThreshold} & $10^{-30}$ & $10^{-30}$ & $10^{-30}$ & $10^{-75}$ \\
\texttt{primalErrorThreshold} & $10^{-30}$ & $10^{-30}$ & $10^{-40}$ & $10^{-75}$ \\
\texttt{dualErrorThreshold} & $10^{-30}$ & $10^{-30}$ & $10^{-40}$ & $10^{-75}$ \\
\texttt{initialMatrixScalePrimal} ($\Omega_\cP$) & $10^{40}$ & $10^{50}$ & $10^{50}$ & $10^{60}$\\
\texttt{initialMatrixScaleDual} ($\Omega_\cD$) & $10^{40}$ & $10^{50}$ & $10^{50}$ & $10^{60}$\\
\texttt{feasibleCenteringParameter} ($\beta_\mathrm{feasible}$) & 0.1 & 0.1 & 0.1 & 0.1 \\
\texttt{infeasibleCenteringParameter} ($\beta_\mathrm{infeasible}$) & 0.3 & 0.3 & 0.3 & 0.3\\
\texttt{stepLengthReduction} ($\gamma$) & 0.7 & 0.7 & 0.7 & 0.7\\
\texttt{choleskyStabilizeThreshold} ($\theta$) & $10^{-40}$ & $10^{-40}$ & $10^{-100}$ & $10^{-140}$ \\
\texttt{maxComplementarity} & $10^{100}$ & $10^{130}$ & $10^{160}$ & $10^{200}$\\
\hline
\end{tabular}
\caption{Parameters for the computations in this work.  Only \SDPB\ parameters that affect the numerics (as opposed to parameters like \texttt{maxThreads} and \texttt{maxRuntime}) are included.  The sets of spins $S_{\Lambda}$ are given in (\ref{eq:spinsets}).  Variables in the interior point algorithm of section~\ref{sec:completealgorithm} that correspond to \SDPB\ parameters are indicated in parentheses.  \texttt{precision} is in binary digits.  The spin sets $S_\L$ refer to (\ref{eq:spinsets}).}
\label{tab:parameters}
\end{table}

\section{Performance}

\subsection{Complexity Comparison}
\label{sec:complexity}

Let us compare the complexity of \SDPB's algorithm to that of \SDPAGMP\ for solving PMPs.  For simplicity, suppose that each polynomial matrix $M_j^n(x)$ has size $m\x m$ and degree $d$.  The matrices $X,Y$ then have $2J$ blocks, each of dimension $m(d+1)$.  We focus on the most expensive parts of each iteration of the interior point algorithm and count the number of multiplications to leading order in $J$, $m$, and $d$.

For \SDPAGMP, we assume the setup described in \cite{Poland:2011ey,Kos:2014bka}.  There, the free variables $y$ were embedded as $1\x 1$ diagonal blocks in the matrix $Y$. (By contrast, for \SDPB\ they are treated separately.)  The most important contributions to the running time of \SDPAGMP\ are as follows.
\begin{itemize}
\item The Schur complement matrix is dense, so each of its elements must be computed individually.  Computing $X^{-1} A_q Y$ requires $N+2J(md)^3$ multiplications, since it involves a block-diagonal dense matrix multiplication $X^{-1} \x (A_q Y)$ (the linear term in $N$ comes from multiplying $1\x1$ blocks).  This must be repeated $\frac{Jm^2 d}2$ times: once for each $q$.  Now, the matrices $A_p$ are typically sparse, with $O(md)$ entries.  Thus, evaluating all the traces $\Tr(A_p X^{-1} A_q Y)$ requires approximately $(N+2J(md)^3)\frac{Jm^2 d}{2} + \frac 1 2 (\frac{J m^2 d}{2})^2 (md)$ multiplications.  These steps dominate the running time for the computations in \cite{Kos:2014bka}.

\item Because the Schur complement matrix is dense, it must be inverted using a full Cholesky decomposition, which requires $\frac 1 3 (\frac{Jm^2d}{2})^3$ multiplications.
\end{itemize}

For \SDPB, computing $S$ takes negligible time.  The most important steps are in solving the Schur complement equation:
\begin{itemize}
\item Computing the Cholesky decomposition $S=L L^T$ takes $\frac 1 3 J (\frac{m^2 d}{2})^3$ multiplications, since it can be done block-wise.
\item Forming $L^{-1} B$ requires $NJ\frac {m^2 d}{2}$ multiplications.
\item Forming $Q=(L^{-1} B)^T (L^{-1} B)$ requires $\frac 1 2 N^2 (J \frac{m^2 d}{2})$ multiplications.  This step dominates the running time for the computations in this work.
\item Computing the LU decomposition of $Q$ requires $\frac 2 3 N^3$ multiplications.
\end{itemize}

\subsection{Running Time for 3d Ising Computations}
\label{app:runningtime}

For the 3d Ising computations in this work, we have 
\be
d&\approx& \Lambda+\frac 5 2 \numax\ \textrm{(for large enough $\ell$)},\nn\\
m&=&1\textrm{ or }2,\nn\\
J&\approx&\textrm{number of included spins}.
\ee
A full analysis of the complexity in $\Lambda$ would require determining the correct asymptotics of each quantity (including \SDPB\ parameters like \texttt{precision}), which may depend on the computation at hand.  In table~\ref{tab:runtimes}, we simply report average runtimes and approximate multiplications/iteration for the choices given in appendix~\ref{sec:parameters}.
\begin{table}[!ht]
\centering
\begin{tabular}{c|c|c|c|c} 
solver & $\Lambda$ & runtime (dual feasible) & runtime (dual infeasible) & mul./iter.\\
\hline
\SDPB & 19 & 3.5 & 0.9 & $2 \x 10^{8}$\\
\SDPB & 27 & 32 & 7.6 & $1\x 10^{9}$ \\
\SDPB & 35 & 190 & 40 & $5\x 10^9$ \\
\SDPB & 43 & 810 & 260 & $2 \x 10^{10}$ \\
\SDPAGMP & 19 & $\sim 300$ & $\sim 300$ & $1\x10^{11}$\\
\SDPAGMP & 27 & -- & -- & $1\x10^{12}$ \\
\SDPAGMP & 35 & -- & -- & $8\x10^{12}$ \\
\SDPAGMP & 43 & -- & -- & $5\x 10^{13}$
\end{tabular}
\caption{Runtimes for a single feasibility computation in the 3d Ising CFT using the correlators $\{\<\s\s\s\s\>, \<\s\s\e\e\>, \<\e\e\e\e\>\}$, as described in \cite{Kos:2014bka} and appendix~\ref{sec:parameters}.  Average runtimes are different depending on whether a spectrum is disallowed (dual feasible) or allowed (dual infeasible).  All times are in CPU-hours (for \SDPB, this means the actual runtime is multiplied by \texttt{maxThreads}, which was 16 for most of the computations in this work). Approximate \SDPAGMP\ times are from \cite{Kos:2014bka}.  All computations were performed on 3.3GHz 64-bit Intel Xeon Sandy Bridge processors.  The column ``mul./iter." gives the approximate number of multiplications per iteration, calculated according to the discussion in subsection~\ref{sec:complexity}.  (To estimate running time from the number of multiplications per iteration, one needs to take into account \texttt{precision}, which also increases with $\Lambda$.)  The \SDPAGMP\ computations with $\Lambda>19$ have not been attempted.}
\label{tab:runtimes}
\end{table}

\clearpage
\bibliography{Biblio}{}
\bibliographystyle{utphys}

\end{document}